\documentclass[english]{cccconf}
\usepackage[comma,numbers,square,sort&compress]{natbib}
\usepackage{amsthm}
\usepackage{newtxtext}      
\usepackage{newtxmath}      

\usepackage{graphicx, graphics, epsfig}   
\usepackage{booktabs, threeparttable}     
\usepackage{array, multirow}          

\usepackage{amsmath,  amsfonts} 
  
\usepackage{amssymb}
\usepackage{mathtools}                  
\usepackage{bm}                          
\usepackage{mathrsfs}                    
\usepackage{dsfont}                      
\usepackage{subeqnarray}

\usepackage{algorithm}
\usepackage{algorithmicx}
\usepackage[noend]{algpseudocode}

\usepackage{xcolor}                       
\usepackage{enumerate,enumitem,subcaption}                 

\usepackage{natbib}                       

\usepackage{braket}                      
\usepackage{times}                       
\usepackage{caption}
\usepackage{comment}

\algdef{SE}[DOWHILE]{Do}{doWhile}{\algorithmicdo}[1]{\algorithmicwhile\ #1}
\newtheorem{theorem}{Theorem}
\newtheorem{proposition}[theorem]{Proposition}
\newtheorem{definition}[theorem]{Definition}
\newtheorem{assumption}[theorem]{Assumption}
\newtheorem{lemma}[theorem]{Lemma}

\newtheorem{remark}{Remark}[section]
\usepackage{epstopdf}

\allowdisplaybreaks[4] 

\makeatletter
\newcommand{\Rmnum}[1]{\expandafter\@slowromancap\romannumeral #1@}
\makeatother
\begin{document}

\title{A Characterization of Integral Input-to-state  Stability for Hybrid Systems with Memory}

\author{Wenbang Wang\aref{amss},
        Neng Li\aref{amss},
        Wei Ren\aref{amss}}



\affiliation[amss]{School of Control Science and Engineering,
       Dalian University of Technology,  Dalian 116024, P.~R.~China
        \email{wang2002@mail.dlut.edu.cn;744845870@mail.dlut.edu.cn;wei.ren@dlut.edu.cn}}

\maketitle

\begin{abstract}
This paper addresses characterizations of Integral Input-to-State Stability (iISS) for hybrid systems with memory. 
Based on the Krasovskii approach, a novel Lyapunov characterization of iISS is established to extend the hybrid system theory to the time-delay case. 
In particular, we introduce the notions of dissipativity, detectability and storage functional to describe the iISS property from different perspectives. 
Under mild regularity and convexity assumptions, the equivalence relations among diverse stability descriptions are established, which lays a solid foundation for the control design.  
Finally, a numerical example is presented to illustrate the derived results.
\end{abstract}

\keywords{Hybrid systems with memory, Integral Input-to-State Stability (iISS), Lyapunov characterizations}

\footnotetext{This work was supported by the Fundamental Research Funds for the Central Universities under Grant DUT22RT(3)090, the LiaoNing Revitalization Talents Program (XLYC2403048), and the National Natural Science Foundation of China under Grant 62573085.}

\section{Introduction}
	Hybrid systems serve as a powerful framework to integrate continuous flows described by differential equations with discrete jumps described by difference equations or logical rules. Because of the strong capability, hybrid systems play an essential role in the modeling and analysis of modern complex engineering systems \cite{HybridDynamicalSystemsModeling}, such as networked control systems \cite{zhang2018guaranteed}, robotic motion control \cite{westervelt2003hybrid}, and mechanical systems \cite{marton2009control}. For instance, networked control systems are often modeled as either switched systems or impulsive systems,  which are two special types of hybrid systems. In practical applications, external disturbances are inevitable and have effects on dynamical systems. One of the main effects is time delays, which may result from remote communication, sensor or actuator latencies, and sample-and-hold implementations \cite{Kamal2021}. Typical examples include switched time-delay systems \cite{mahmoud2010switched} and impulsive time-delay systems \cite{li2021stability}. In particular, hybrid systems with time delays are called hybrid systems with memory.

Due to the existence of time delays, it is not easy to consider the stability analysis of hybrid systems with memory \cite{Hajdu2017}. Nevertheless, extensive research has been conducted to overcome these difficulties. Specifically, a general framework for Lyapunov-based sufficient stability conditions for hybrid systems with memory was established in \cite{LiuTeel}, which was subsequently extended to investigate Input-to-State Stability (ISS) in \cite{renwei}. For switched time-delay systems, the Average Dwell Time (ADT) technique has been widely adopted to mitigate the destabilizing effects of switching signals \cite{yan2008stability}. The work in \cite{Sun2006} reduced conservatism by employing  linear matrix inequalities. Regarding impulsive time-delay systems, the Razumikhin method and comparison principle were employed in \cite{liu2001uniform} to address the jump-delay coupling. Furthermore, these results were generalized to hybrid systems with memory in \cite{LiuSun} via Razumikhin-type theorems.

Existing works mainly focus on the classic stability properties, including asymptotic stability and ISS. However, Global Asymptotic Stability (GAS) fails to reflect the effects of noise and uncertainty commonly present in real-world operating conditions. ISS  requires systems to maintain bounded states for any input with bounded magnitude \cite{Pepetimedelay}. This stringent robustness requirement is often overly demanding for systems with time delays, frequently leading to conservative delay-independent stability conditions. In contrast, Integral Input-to-State Stability (iISS) \cite{Jiang2004Unifying}  adopts an energy dissipation perspective, allowing for input magnitudes to be arbitrarily large or even unbounded for short durations. System stability is guaranteed as long as the integral energy remains finite \cite{Liu2020ISS}. For hybrid systems with memory, conducting iISS analysis not only enables more effective handling of sudden, intense disturbances or non-persistent attacks during pulse instants but is also critical for energy-constrained robust control design \cite{iissNCS}.

Despite the theoretical appeal of iISS, established stability tools for subsystems often fail to generalize to the unified framework of hybrid systems with memory. Existing methods for purely switched \cite{Liu2022iISSADT} or impulsive systems \cite{ Chen2009ImpulsiveDelay} struggle to handle the complex coupling between history-dependent dynamics and discrete jumps. The primary challenge lies in the construction of LKFs \cite{Chaillet2022LKFiISS}, which accounts for the energy stored in the history segment and the non-monotonic hybrid behaviors simultaneously. Specifically, the integral terms in LKFs do not reset instantaneously during impulses, creating a mathematical mismatch between the discrete state updates and the continuous functional energy evolution. Consequently, a coherent and constructive framework for establishing iISS criteria for hybrid systems with memory remains lacking.

To address this gap, this paper develops a unified theoretical framework for iISS of hybrid systems with memory. The main contributions of this paper are summarized as follows. First, we extend the notions of iISS and dissipativity to the framework of hybrid systems with memory. Second, we construct verifiable criteria using LKFs and storage functionals tailored for history-dependent dynamics and jump behaviors. Third, we establish the equivalence among LKF-based iISS criteria, storage functionals, and asymptotic gain estimates under mild regularity assumptions. The resulting theorems clarify how integral input effects, history dependence, and discrete transitions collectively determine stability characteristics. 

The rest of this paper is organized below. Section 2 presents the fundamentals and proposes stability properties of hybrid systems with memory. Section 3 presents the main results. A numerical example is given in Section 4 to illustrate the derived results, followed by the conclusion in Section 5.

\section{Preliminaries}

	Let $\mathbb{R}:=(-\infty,+\infty)$, $\mathbb{R}_{\geq0}:=[0,+\infty)$, $\mathbb{R}_{\leq0}:=(-\infty,0]$, $\mathbb{Z}_{\geq0}:=\{0,1,\ldots\}$ and $\mathbb{Z}_{\leq0}:=\{0,-1,\ldots\}$. Let $\mathbb{R}^n$ be the $n$-dimensional Euclidean space, $\mathbb{B}$ be the open unit ball in $\mathbb{R}^n$, and $\varepsilon\bar{\mathbb{B}}$ be the closed ball of radius $\varepsilon>0$. For a vector or matrix $A$, $A^\top$ is its transpose and $|A|$ is the Euclidean norm. For $(t,j),(s,k)\in\mathbb{R}^2$,  $(t,j)\preceq(s,k)$ holds if $t+j\leq s+k$ and $(t,j)\prec(s,k)$ holds if $t+j<s+k$. For a set $\mathcal{A}\subset\mathbb{R}^n$, its closure and closed convex hull are denoted as $\bar{\mathcal{A}}$ and $\overline{co}\mathcal{A}$, respectively.The distance from a point $x\in\mathbb{R}^n$ to a set $\mathcal{A}\subset\mathbb{R}^n$ is defined as $|x|_\mathcal{A}:=\inf_{y\in\mathcal{A}}|x-y|$. For $\mathcal{A},\mathcal{B}\subset\mathbb{R}^n$ with $\mathcal{A}\subset\mathcal{B}$, the set $\mathcal{A}$ is said to be relatively closed in $\mathcal{B}$ if $\mathcal{A}=\bar{\mathcal{A}}\cap\mathcal{B}$. 
A continuous function $\alpha:\mathbb{R}_{\geq0}\to\mathbb{R}_{\geq0}$ is of class $\mathcal{K}$, if it is positive definite and strictly increasing. A function $\alpha$ is said to be of class $\mathcal{K}_{\infty}$ if it is of class $\mathcal{K}$ and, in addition, is unbounded. A function $\beta: \mathbb{R}_{\geq0}\times\mathbb{R}_{\geq0}\to\mathbb{R}_{\geq0}$ is said to be of class $\mathcal{KL}$ if for every fixed $t\geq0$, the function $\beta(\cdot,t)$ belongs to class $\mathcal{K}$, and for every fixed $s\geq0$, $\beta(s,t)$ decreases to zero as $t\to\infty$. A function $\beta:\mathbb{R}_{\geq0}\times\mathbb{R}_{\geq0}\times\mathbb{R}_{\geq0}\to\mathbb{R}_{\geq0}$ is said to be of class $\mathcal{KLL}$ if for every fixed $t\geq0$, the function $(r,s)\mapsto\beta(r,s,t)$ belongs  to class $\mathcal{KL}$, and for every fixed $s\geq0$, the function $(r,t)\mapsto\beta(r,s,t)$  belongs  to class $\mathcal{KL}$. A set-valued mapping $\mathcal{F}:\mathcal{O}\rightrightarrows\mathbb{R}^n$ is outer semicontinuous at $x\in\mathcal{O}$, if for every sequence $x_i\to x$ and every sequence $y_i\to y$ such that $y_i\in\mathcal{F}(x_i)$ for all $i$, it follows that $y\in\mathcal{F}(x)$. A set-valued mapping $\mathcal{F}:\mathcal{O}\rightrightarrows\mathbb{R}^n$ is locally bounded, if for every compact set $K\subset\mathcal{O}$, there exists a compact set $K^{\prime}\subset\mathbb{R}^n$ such that $\mathcal{F}(K)\subset K^\prime$. For functions $\alpha,\beta,h:\mathbb{R}_{\ge0}\to\mathbb{R}_{\ge0}$ and a scalar $c>0$, the composition $(\alpha\circ\beta\circ h)(\nu):=\alpha(\beta(h(\nu)))$ is used for all $\nu\ge0$.
\subsection{Hybrid Systems with Memory}
In this subsection, the fundamental concepts of hybrid systems with memory are recalled from \cite{LiuTeel}.

A set $E\subseteq\mathbb{R}\times\mathbb{Z}$ is called a \emph{compact hybrid time domain with memory} if $E_{\geq0}=\cup_{j=0}^{J-1}([t_j,t_{j+1}],j)$ and $E_{\leq0}=\cup_{k=0}^K([s_k,s_{k-1}],-k+1)$ with $s_K\leq\cdots\leq s_0=0=t_0\leq\cdots\leq t_J$. The set $E$ is called a \emph{hybrid time domain with memory} if for all $(T,J)\in E_{\geq0}$ and $(S,K)\in\mathbb{R}_{\geq0}\times\mathbb{Z}_{\geq0}$, the set $(E_{\geq0}\cap([0,T]\times\{0,\ldots,J\}))\cup(E_{\leq0}\cap([-S,0]\times\{-K,\ldots,0\}))$ is a compact hybrid time domain with memory. Especially, $E_{\leq0}$ is called a \emph{hybrid memory domain}. A hybrid signal is a function defined on a hybrid time domain with memory. Let $\operatorname{dom}_{\geq0}x:=\operatorname{dom}x\cap(\mathbb{R}_{\geq0}\times\mathbb{Z}_{\geq0})$ and $\operatorname{dom}_{\leq0}(x):=$ $\operatorname{dom}x\cap(\mathbb{R}_{\leq0}\times\mathbb{Z}_{\leq0})$. A hybrid signal $u: \operatorname{dom}u\to\mathcal{U}$ is called a \emph{hybrid input}, if it satisfies  $\operatorname{dom}_{\leq0}(u)\equiv\{(0,0)\}$, and for each $j\in \mathbb{Z}_{\geq0}$, the function $u(\cdot,j)$ is Lebesgue measurable and locally essentially bounded on $I_u^j=\{t: (t,j)\in\operatorname{dom}u\}$. A hybrid signal $x:\operatorname{dom}x\to\mathbb{R}^n$ is called a \emph{hybrid arc with memory}, if $x(\cdot,j)$ is locally absolutely continuous on $I_x^j=\{t: (t,j)\in\operatorname{dom}x\}$ with $j\in\mathbb{Z}$. A \emph{hybrid memory arc} is a hybrid arc with memory whose time domain is a hybrid memory domain. 

Given a hybrid time domain with memory $E$, define $\sup_{t\geq 0} E:=\sup\{t\in\mathbb{R}_{\geq0}: \exists j\in\mathbb{Z}_{\geq 0} \text{ s.t. } (t, j) \in E_{\geq 0}\}$ and $\inf_{t\leq 0}E:=\inf\{t\in\mathbb{R}_{\leq0}: \exists j \in \mathbb{Z}_{\leq 0} \text{ s.t. } (t, j) \in E_{\leq 0} \}$. Similarly, $\sup_{j\geq0}E$ and $\inf_{j\leq0} E$ can be defined. Denote $\mathcal{L}_{\geq0}(E):=\sup_{t\geq0}E+\sup_{j\geq0}E$ and $\mathcal{L}_{\leq0}(E):=\inf_{t\leq0}E+\inf_{j\leq0}E$. Given $\Delta\in\mathbb{R}_{\geq0}$, $\mathcal{M}^\Delta$ denotes the set of the hybrid memory arcs $\varphi$ satisfying $-\Delta-1\leq\mathcal{L}_{\leq0}(\operatorname{dom}\varphi)\leq$ $-\Delta$. Given a hybrid arc with memory $x$, the operator $\mathcal{A}_{[\cdot,\cdot]}^{\Delta}x$ : $\operatorname{dom}_{\geq0}x\to\mathcal{M}^\Delta$ is defined by $\mathcal{A}_{[t,j]}^\Delta x(s,k)=x(t+s,j+k)$ for all $(s,k)\in\operatorname{dom}\mathcal{A}_{[t,j]}^\Delta x$, where $(t,j)\in\operatorname{dom}_{\geq0}x$, $\operatorname{dom}\mathcal{A}_{[t,j]}^\Delta x:=\{(s,k)\in\mathbb{R}_{\leq0}\times\mathbb{Z}_{\leq0}: (t+s,j+k)\in\operatorname{dom}x,s+k\geq-\Delta_{\mathrm{inf}}\}$, and $\Delta_{\mathrm{inf}}:=\inf\{\delta\geq\Delta: \exists(t+s,j+k)\in\operatorname{dom}x \text{ such that } s+k=-\delta\}$. Let $\mathcal{A}^\Delta x:=\mathcal{A}_{[\cdot,\cdot]}^\Delta x$ if the time argument can be omitted.

Consider the hybrid system  $\mathcal{H}_\mathcal{M}^\Delta:=(\mathcal{F},\mathcal{G},\mathcal{C},\mathcal{D},\mathbb{R}^n,\mathcal{U})$:
\begin{align} 
	\label{eq:hybrid}
	\left\{\begin{aligned}
		&\dot{x}\in\mathcal{F}(\mathcal{A}^\Delta x,u)  &\quad&\forall(\mathcal{A}^\Delta x,u)\in\mathcal{C} ,\\ 
		&x^+\in\mathcal{G}(\mathcal{A}^\Delta x,u)  &\quad&\forall(\mathcal{A}^\Delta x,u)\in\mathcal{D} , 
	\end{aligned}\right.
\end{align}
where $x\in\mathbb{R}^n$ is the state, $u\in\mathcal{U}\subset\mathbb{R}^m$ is the input, $\mathcal{C}\subseteq\mathcal{M}^{\Delta}\times\mathcal{U}$ is the flow set, and $\mathcal{D}\subseteq\mathcal{M}^{\Delta}\times\mathcal{U}$ is the jump set.  
\begin{assumption}\cite{renwei}
	The following conditions  guarantee the existence of solutions for system \eqref{eq:hybrid}.
	\label{assume}
	\begin{enumerate}
		\item $\mathcal{U}\subset\mathbb{R}^m$ is a closed set; $\mathcal{C}\cap(\mathcal{M}^\Delta\times\mathcal{U})$ and $\mathcal{D}\cap(\mathcal{M}^\Delta\times\mathcal{U})$ are
		relatively closed in $\mathcal{M}^{\Delta}\times\mathcal{U};$
		\item The mapping $\mathcal{F}:\mathcal{C}\rightrightarrows\mathbb{R}^n$ is outer semicontinuous and locally bounded relative to $\mathcal{C}$, with $\mathcal{F}(\varphi,u)$ being nonempty and convex for each $(\varphi,u)\in\mathcal{C};$
		\item The mapping $\mathcal{G}:\mathcal{D}\rightrightarrows\mathbb{R}^n$ is outer semicontinuous and locally bounded relative to $\mathcal{D}$, with $\mathcal{G}(\varphi,u)$ being nonempty  for each $(\varphi,u)\in\mathcal{D}.$
	\end{enumerate}
\end{assumption}

\begin{definition}
	A hybrid arc with memory $x$ and a hybrid input $u$ are
	called a \emph{solution pair} $(x,u)$ to $\mathcal{H}_\mathcal{M}^\Delta$, if
	\begin{enumerate}
		\item $\operatorname{dom}_{\geq 0}\,x = \operatorname{dom}\,u$ \ and \ 
		$(\mathcal{A}_{[0,0]}^\Delta x, u(0,0)) \in \mathcal{C} \cup \mathcal{D};$
		
		\item for all $j \in \mathbb{Z}_{\geq 0}$ and almost all $t$ with 
		$(t,j) \in \operatorname{dom}_{\geq 0}\,x$,
		$(\mathcal{A}_{[t,j]}^\Delta x, u(t,j)) \in \mathcal{C}$ \ and \
		$\dot{x}(t,j) \in \mathcal{F}(\mathcal{A}_{[t,j]}^\Delta x, u(t,j));$
		
		\item for all $(t,j) \in \operatorname{dom}_{\geq 0}\,x$ such that 
		$(t,j+1) \in \operatorname{dom}_{\geq 0}\,x$,
		$(\mathcal{A}_{[t,j]}^\Delta x, u(t,j)) \in \mathcal{D}$  and 
		$x(t,j+1) \in \mathcal{G}(\mathcal{A}_{[t,j]}^\Delta x,\\u(t,j)).$
	\end{enumerate}
\end{definition}

	Given any hybrid signal $z$, let $\sharp z:=\sup_{(t,j)\in\operatorname{dom}z}t+j$. For any hybrid input $u:\operatorname{dom}u\to\mathcal{U}$, let $(t_1,j_1),(t_2,j_2)\in\operatorname{dom}u$ and $(t_1,j_1)\preceq(t_2,j_2)$, and define
\begin{align*}
	\|u\|_{[(t_1,j_1),(t_2,j_2)]} &= \max \{\underset{(t,j)\in\Gamma(u), (t_1,j_1) \preceq (t,j) \preceq (t_2,j_2)}{\sup} |u(t,j)|, \\
	&\!\!\!\! \quad 
	\underset{(t,j)\in\operatorname{dom}u\setminus\Gamma(u), (t_1,j_1) \preceq (t,j) \preceq (t_2,j_2)}{\text{ess.}\sup} |u(t,j)|
	\},
\end{align*}
where $\Gamma(u):=\{(t,j)\in\operatorname{dom}u: (t,j+1)\in\operatorname{dom}u\}$. Define $\|u\|_{(t_2, j_2)}:=\|u\|_{[(0,0),(t_2,j_2)]}$ and $\|u\|_\sharp:=\lim_{t_2+j_2\to\sharp u}\|u\|_{(t_2,j_2)}$. Thus, $\|u\|_\infty:=\|u\|_\sharp$ if $\sharp u=+\infty$. 

A solution pair $(x,u)$ to $\mathcal{H}_\mathcal{M}^\Delta$ is \emph{complete} if $\operatorname{dom}_{\geq0}x$ is unbounded,  \emph{maximal} if it cannot be extended, and \emph{bounded} if there exists a compact set $\mathcal{O}\subset\mathbb{R}^n$ such that $x(t,j)\in\mathcal{O}$ for all $(t,j)\in\operatorname{dom}x$. $\mathfrak{S}_u^{\Delta}(\varphi)$ denotes the set of all the maximal solution pairs $(x,u)$ to $\mathcal{H}_\mathcal{M}^\Delta$ with the initial condition $\varphi\in\mathcal{M}^\Delta$ and finite $\|u\|_\sharp$. Assume that the system $\mathcal{H}_\mathcal{M}^\mathrm{\Delta}$ is \emph{forward complete}: for all $\varphi\in\mathcal{M}^\Delta$ and all inputs $u$ with finite $\left\|u\right\|_{\#}$, every solution pair $(x,u)\in\mathfrak{S}_u^\Delta(\varphi)$ is complete. For any $\varphi\in\mathcal{M}^\Delta$, let $\varphi_0=\mathcal{A}_{[0,0]}^\Delta x$, and $x(t, j, \varphi_0, u)$ is the state at $(t,j)\in\operatorname{dom}x$ starting from $\varphi_0$ under the control input $u\in\mathcal{U}$. For a set $\mathcal{W}\subset\mathbb{R}^n$ and $r\geq0$, define $\|\varphi\|_{\mathcal{W}}:=\sup_{s+k\in[-\Delta-1, 0]}|\varphi(s,k)|_{\mathcal{W}}$, and ${\mathcal{M}^\Delta}_r^{\mathcal{W}}:=\{\varphi\in\mathcal{M}^\Delta: \|\varphi\|_\mathcal{W}\leq r\}$.

For the system $\mathcal{H}_{\mathcal{M}}^\Delta$ and a set $B\subset\mathbb{R}^n$, a set $N\subset B$ is said to be \emph{weakly invariant}, if for every hybrid memory arc $\varphi$ with $\varphi(0,0)\in N$, there exists a complete solution pair $(x,u)\in\mathfrak{S}_u^\Delta(\varphi)$ such that $x(t,j) \in N$ for all $(t,j) \in \operatorname{dom}x$. Among all weakly invariant sets contained in $B$, the one that is largest with respect to set inclusion is called the \emph{largest weakly invariant set}, denoted by $\mathcal{N}_{\max}(B)$.

For a functional $V:\mathcal{M}^\Delta\to\mathbb{R}_{\geq0}$, the upper Dini derivative of $V$ at $\varphi\in\mathcal{M}^\Delta$ along the solutions of $\mathcal{H}_{\mathcal{M}}^{\Delta}$ is given by
\begin{align*}
	D^+ V(\varphi)&:= \sup_{(x,u)\in\mathfrak{S}_u^\Delta(\varphi)}\limsup_{h\to 0^+}\frac{V(\mathcal{A}_{[h,0]}^\Delta x)-V(\varphi)}{h}\\
	&:=\sup_{(x,u)\in\mathfrak{S}_u^\Delta(\varphi)} \Braket{\nabla V(\varphi),f(\varphi,u)}.
\end{align*}
\subsection{Stability Notions}
Consider the system \eqref{eq:hybrid}. The set $\mathcal{W}$ is \emph{0-input pre-stable}, if for each $\varepsilon >0$ there exists $\delta>0$ such that each solution pair $(x,0)\in\mathfrak{S}_u^{\Delta}(\varphi)$ with $\|\mathcal{A}_{[0,0]}^\Delta x\|_{\mathcal{W}}\leq\delta$ satisfies $|x(t,j)|_{\mathcal{W}}\leq\varepsilon$ for all $(t,j)\in \operatorname{dom}_{\geq0}x$.
The set $\mathcal{W}$ is said to be \emph{0-input pre-attractive}, if there exists $\delta>0$ such that each solution pair $(x,0)\in\mathfrak{S}_u^{\Delta}(\varphi)$ with  $\|\mathcal{A}_{[0,0]}^\Delta x\|_{\mathcal{W}}\leq\delta$ is bounded and, if complete, satisfies $\lim_{(t,j)\in \operatorname{dom}_{\geq0}x,t+j\to+\infty}|x(t,j,\varphi_0,0)|_{\mathcal{W}}=0$.
The set $\mathcal{W}$ is \emph{0-input pre-asymptotically stable}, if it is both 0-input pre-stable and 0-input pre-attractive.
\begin{definition} \label{defn iISS} 
	Consider the system \eqref{eq:hybrid}, a closed set $\mathcal{W}\subset\mathbb{R}^n$ is \emph{integral input-to-state stable (iISS)}, if there exist $\beta\in\mathcal{KLL}$, $\rho\in\mathcal{K}$ such that for all $(t, j)\in\operatorname{dom}_{\geq0}x$, and all $\mathcal{A}_{[0,0]}^\Delta x\in\mathcal{M}^\Delta$, each solution pair $(x,u)\in\mathfrak{S}_u^{\Delta}(\varphi)$ satisfies 
\begin{equation}
	\label{equation iISS}
	\hspace{-8pt}\scalebox{0.88}{\(\displaystyle
		|x(t,j)|_{\mathcal{W}}\leq\max\left\{\beta(\|\mathcal{A}_{[0,0]}^\Delta x\|_{\mathcal{W}},t,j),\\
		\sum_{n=0}^{j}\int_{t_{n}}^{t_{n+1}}\rho(|u|)\,ds\right\}.
		\)}
\end{equation}
The set $\mathcal{W}$ is said to be \emph{locally integral input-to-state stable (locally iISS)}, if there exist  $r_1,r_2>0$ such that, whenever $\|\mathcal{A}_{[0,0]}^{\Delta}x\|_{\mathcal{W}}\le r_1$ and a solution pair $(x,u)\in\mathfrak{S}_u^{\Delta}(\varphi)$ satisfies $\|u\|_{\infty}\le r_2$, the inequality \eqref{equation iISS} holds.
\end{definition}
\begin{definition}\label{defn o-input detectable} 
Let $\mathcal{W}\subset\mathbb{R}^n$ be a closed set satisfying $\mathcal{W}\subset N$. The set $\mathcal{W}$ is said to be \emph{0-input detectable} on $N$ for $\mathcal{H}_{\mathcal{M}}^\Delta$, if for every complete solution pair $(x,0)$ of $\mathcal{H}_{\mathcal{M}}^\Delta$ satisfying $x(t, j)\in N$ for all $(t,j)\in\operatorname{dom}_{\ge0}x$, we have
\begin{align*}
	\lim_{\substack{(t,j)\to+\infty, (t,j)\in\operatorname{dom}_{\ge0}x}}|x(t,j)|_{\mathcal{W}}=0.
\end{align*}
\end{definition} 
\begin{definition} \label{defn iISS Ly}
	Given a closed set $\mathcal{W}\subset\mathbb{R}^n$, a smooth functional $V:\mathcal{M}^\Delta\to\mathbb{R}_{\geq0}$ is called an \emph{iISS Lyapunov-Krasovskii functional (iISS-LKF)}   for \eqref{eq:hybrid}, if there exist 
$\alpha_1,\alpha_2,\alpha_3\in\mathcal{K}_\infty$ and $\rho\in\mathcal{K}$ such that
\begin{enumerate}
	\item for all $\varphi\in\mathcal{M}^\Delta,\alpha_1(|\varphi(0,0)|_{\mathcal{W}})\leq V(\varphi)\leq\alpha_2(\|\varphi\|_{\mathcal{W}})$;
	\item for all $(\varphi,u)\in\mathcal{C},D^+V(\varphi) \leq -\alpha_3(|\varphi(0,0)|_{\mathcal{W}}) + \rho(|u|)$;
	\item for all $(\varphi,u)\in\mathcal{D},V(\varphi^+)\leq V(\varphi)$.
\end{enumerate}	
\end{definition}
\begin{definition} 	\label{storge }
	Given a closed set $\mathcal{W}\subset\mathbb{R}^n$, a smooth functional $V:\mathcal{M}^\Delta\to\mathbb{R}_{\geq0}$ is called a \emph{storage functional}  for \eqref{eq:hybrid}, if there exist $\alpha_1,\alpha_2\in\mathcal{K}_\infty,\hat{\rho}\in\mathcal{K}$, and a continuous functional $\psi:\mathcal{M}^\Delta\to\mathbb{R}_{\geq0}$ such that
	\begin{align}  
		\alpha_1(|\varphi(0,0)|_\mathcal{W}) &\leq V(\varphi) \leq \alpha_2(\|\varphi\|_{\mathcal{W}}),  \!\!\!\!&&\forall \varphi\in\mathcal{M}^\Delta, \label{eq:5} \\
		D^+V(\varphi) &\leq -\psi(\varphi) +\hat{\rho}(|u|), \quad &&\forall(\varphi,u)\in\mathcal{C},\label{eq:6} \\
		V(\varphi^+) &\leq V(\varphi), \quad &&\forall(\varphi,u)\in\mathcal{D}. \label{eq:7} 
	\end{align}
\end{definition}
\begin{definition} \label{defn exp ly}
		Given a closed set $\mathcal{W}\subset\mathbb{R}^n$, an iISS-LKF $V:\mathcal{M}^\Delta\to\mathbb{R}_{\geq0}$ is said to have an \emph{exponential decay} for \eqref{eq:hybrid}, if there exist $\alpha_1, \alpha_2\in\mathcal{K}_\infty$, $\rho\in\mathcal{K}$, and $v\in(0,1]$ such that
	\begin{enumerate}
		\item for all $\varphi\in\mathcal{M}^\Delta,\alpha_1(|\varphi(0,0)|_\mathcal{W})\leq V(\varphi)\leq\alpha_2(\|\varphi\|_{\mathcal{W}})$;
		\item for all $(\varphi, u)\in\mathcal{C}$, $D^+V(\varphi)\leq-vV(\varphi)+\rho(|u|)$;
		\item for all $(\varphi, u)\in\mathcal{D}$, $V(\varphi^+)\leq V(\varphi)$.
	\end{enumerate}	
\end{definition}
\begin{definition} \label{defn asy  }
	The system \eqref{eq:hybrid} has the \emph{asymptotic gain property}, if there exists $\gamma\in\mathcal{K}$, such that each solution pair $(x,u)\in\mathfrak{S}_u^{\Delta}(\varphi)$ is bounded and if complete then
\begin{equation} 
	\label{asy} 	
	\underset{(t,j)\in\operatorname{dom}x, t+j\to\infty}{\lim\sup}|x(t,j)|_{\mathcal{W}}\leq\sum_{n=0}^{j}\int_{t_{n}}^{t_{n+1}}\gamma(|u|)ds.
\end{equation}
\end{definition}
\begin{definition} \label{defn gps }
	The system \eqref{eq:hybrid} is \emph{globally pre-stable}, if there exist $\alpha,\gamma \in \mathcal{K}$ such that
\begin{equation} \label{gps} 	
	\scalebox{0.9}{$
	|x(t,j)|_{\mathcal{W}}\leq\max\{\alpha(\|\mathcal{A}_{[0,0]}^\Delta x\|_{\mathcal{W}}),\sum_{n=0}^{j}\int_{t_{n}}^{t_{n+1}}\gamma(|u|)ds\}.
	$}
\end{equation}
\end{definition}
\begin{definition} \label{defn noniiss}
	The system \eqref{eq:hybrid} is \emph{nonuniformly iISS (non-iISS)} if it is globally pre-stable and has the asymptotic gain property.
\end{definition}
\begin{definition}
	The system \eqref{eq:hybrid} is \emph{smoothly dissipative} if there exist a storage functional $V:\mathcal{M}^\Delta\to\mathbb{R}_{\ge0}$ and a functional $\rho_d:\mathcal{M}^\Delta\to\mathbb{R}$ such that
\begin{enumerate}
	\item for every $\varphi\in\mathcal{C}$, $D^+V(\varphi)\le\rho_d(\varphi)$,
	\item for every $\varphi\in\mathcal{D}$, $V(\varphi^+)-V(\varphi)\le\rho_d(\varphi)$,
\end{enumerate}
where $\rho_d$ is called the dissipation rate functional.
\end{definition}

\section{Main Results}
This section presents the main results of this work. 
\begin{theorem} \label{mainresult}
	Let $\mathcal{W} \subset \mathbb{R}^n$ be a closed set, and let Assumption \ref{assume} hold. For each  $\varphi\in{\mathcal{M}}^{\Delta}$ and  $\varepsilon\geq0$, the set $\{\mathcal{F}(\varphi,u):u\in\mathcal{U}\cup \varepsilon\bar{\mathbb{B}}\}$ is convex. The following statements are equivalent.
\begin{enumerate}[label=\Roman*.]
	\item $\mathcal{H}_{\mathcal{M}}^{\Delta}$ admits an iISS-LKF with an exponential decay.
	\item $\mathcal{H}_{\mathcal{M}}^{\Delta}$ admits an iISS-LKF.
	\item $\mathcal{H}_{\mathcal{M}}^{\Delta}$ is iISS.
	\item $\mathcal{H}_{\mathcal{M}}^{\Delta}$ is nonuniformly iISS.
	\item $\mathcal{H}_{\mathcal{M}}^{\Delta}$ has the asymptotic gain property with respect to $\mathcal{W}$, and is locally iISS.
	\item$\mathcal{H}_{\mathcal{M}}^{\Delta}$ has the asymptotic gain property with respect to $\mathcal{W}$, and the set $\mathcal{W}$ is 0-input pre-stable for $\mathcal{H}_{\mathcal{M}}^{\Delta}$.
	\item $\mathcal{H}_{\mathcal{M}}^{\Delta}$ is smoothly dissipative with respect to $\mathcal{W}$ and the distance to $\mathcal{W}$ is 0-input detectable relative to $\{\xi \in \mathcal{M}^{\Delta} : \psi(\xi) = 0\}$ with $\psi $ given in \eqref{eq:6}.
	\item $\mathcal{H}_{\mathcal{M}}^{\Delta}$ is both 0-input pre-asymptotically stable and smoothly dissipative with respect to $\mathcal{W}$.
\end{enumerate}
\end{theorem}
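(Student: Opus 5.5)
We prove the theorem by following the standard route for iISS characterizations of hybrid systems without memory (Angeli–Sontag–Wang for ODEs, Noroozi–Nešić–Teel for hybrid inclusions), adapted to the Krasovskii setting. We prove the chain of implications
\[
\mathrm{I}\Rightarrow\mathrm{II}\Rightarrow\mathrm{VIII}\Rightarrow\mathrm{VII}\Rightarrow\mathrm{III}\Rightarrow\mathrm{IV}\Rightarrow\mathrm{VI}\Rightarrow\mathrm{V}\Rightarrow\mathrm{I}.
\]
Several links are immediate from the definitions. For I$\Rightarrow$II, item 2 of Definition \ref{defn exp ly} combined with $V(\varphi)\ge\alpha_1(|\varphi(0,0)|_{\mathcal W})$ gives the decay rate $\alpha_3:=v\alpha_1$. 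For II$\Rightarrow$VIII, take $\psi(\varphi):=\alpha_3(|\varphi(0,0)|_{\mathcal W})$ and $\rho_d$ equal to the right-hand sides, so an iISS-LKF is a storage functional. Setting $u=0$ then gives $D^+V\le0$ and $V^+\le V$. Stability follows from the sandwich bounds, and attractivity from integrating $\int\alpha_3(|x|_{\mathcal W})\le\alpha_2(\|\varphi\|_{\mathcal W})$. Because jumps do not increase $V$, this is combined with a uniform-continuity argument on flow intervals, or with the hybrid invariance principle of \cite{LiuTeel} on pure-jump tails. For III$\Rightarrow$IV, take $\alpha(s)=\beta(s,0,0)$ and note that $\beta\to0$ gives the asymptotic gain. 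For IV$\Rightarrow$VI, set $u=0$ in \eqref{gps}. For VI$\Rightarrow$V, local iISS follows from 0-input pre-stability plus asymptotic gain once a local $\mathcal{KLL}$ estimate is built; the construction is the same as in the next step.

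For VII$\Rightarrow$III we work along a solution. From \eqref{eq:6}–\eqref{eq:7}, $V(\mathcal A^\Delta_{[t,j]}x)\le V(\varphi_0)+\sum_n\int\hat\rho(|u|)$, which gives a global pre-stability bound through \eqref{eq:5}. To obtain convergence we argue by contradiction. Suppose a solution whose input has finite integral does not converge to the $\limsup$ bound. Since $\int\psi(\mathcal A^\Delta x)$ is finite and the input energy on tails vanishes, we take shifted segments $\mathcal A^\Delta_{[t_i,j_i]}x$ and extract a graphically convergent subsequence. This uses Assumption \ref{assume} and the sequential compactness of solution sets for hybrid systems with memory in \cite{LiuTeel}. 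The limit is a complete solution with zero input lying in $\{\psi=0\}$, and 0-input detectability then yields the contradiction. This establishes asymptotic gain, and hence non-iISS. Uniformity, i.e. upgrading to a $\mathcal{KLL}$ bound $\beta$ as in \eqref{equation iISS}, is obtained by the usual contradiction argument over initial conditions in ${\mathcal M^\Delta}_r^{\mathcal W}$. This again uses compactness of solutions, together with the convexity hypothesis on $\{\mathcal F(\varphi,u):u\in\mathcal U\cup\varepsilon\bar{\mathbb B}\}$, which guarantees that limits of solutions with small inputs are solutions of a relaxed system with inputs in $\varepsilon\bar{\mathbb B}$. The same argument with $\{\psi=0\}$ replaced by 0-input pre-stability gives IV/V$\Rightarrow$III.

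The converse step V$\Rightarrow$I, constructing an exponentially decaying iISS-LKF, is the main obstacle. We plan to first pass from V to III as above and then build $V$ in the Krasovskii framework. Consider the perturbed system with inputs restricted to $\mathcal U\cup\varepsilon\bar{\mathbb B}$; the convexity assumption makes it satisfy Assumption \ref{assume}. Using a converse Lyapunov–Krasovskii theorem for asymptotic stability of hybrid systems with memory (as in \cite{LiuTeel,renwei}), we obtain a smooth $V_0$ with $D^+V_0\le -V_0$ for small inputs and $V_0^+\le V_0$. Next define
\[
V(\varphi):=\sup_{(x,u)}\sup_{(t,j)}\Big[\,\sigma\big(V_0(\mathcal A^\Delta_{[t,j]}x)\big)-\textstyle\sum_n\int\rho(|u|)\,\Big],
\]
where $\sigma$ is a Sontag-type $\mathcal K_\infty$ reshaping of the form $\sigma(s)=\int_0^s d\tau/(1+\tau)$, chosen so that the growth of $V_0$ under large inputs is dominated linearly by $\rho(|u|)$. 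With this choice the decay becomes $-vV$ with $v\in(0,1]$ while jumps remain nonincreasing. Two things then need checking: the sandwich bounds, which follow from III together with item 1 of Definition \ref{defn iISS Ly} applied to $V_0$, and smoothness, which requires a final smoothing step on $\mathcal M^\Delta$. The smoothing step is delicate because of the infinite-dimensional state, and this is where we expect most of the technical work.
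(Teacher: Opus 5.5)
Your cycle must return to item I somewhere, and it does so through V$\Rightarrow$I. That step cannot be completed, and the obstruction is not the smoothing on $\mathcal M^\Delta$. Any $V$ as in Definition~\ref{defn exp ly} yields, by the comparison lemma on flow intervals and monotonicity at jumps,
\[
\alpha_1(|x(t,j)|_{\mathcal W})\le e^{-vt}\alpha_2(\|\mathcal A^\Delta_{[0,0]}x\|_{\mathcal W})+\rho(\|u\|_\infty)/v .
\]
Since $\alpha_1\in\mathcal K_\infty$, bounded inputs must then produce bounded states, and iISS does not give this. Take $\mathcal W=\{0\}$, $\mathcal U=\mathbb R$, $\mathcal D=\emptyset$ and $\mathcal F(\varphi,u)=-\arctan\varphi(0,0)+u$.
\begin{itemize}
\item Comparing $|x(t)|-\int_0^t|u|\,ds$ with $\dot z=-\arctan z$ gives $|x(t)|\le\beta(|x(0)|,t)+\int_0^t|u|\,ds$, where $\beta(r,\cdot)$ solves this equation from $r$. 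So III holds.
\item $\ln(1+\varphi(0,0)^2)$ is a storage functional with $\psi(\varphi)=2\varphi(0,0)\arctan\varphi(0,0)/(1+\varphi(0,0)^2)$, so VII and VIII hold.
\item Yet $u\equiv2$ gives $\dot x\ge2-\pi/2$ and $x\to\infty$. This is the same behaviour the paper reports for $u_3$ while calling its system iISS.
\item II fails as well: integrating its dissipation inequality along this solution gives $\int_0^t\alpha_3(|x|)\,ds\le V(\varphi_0)+\rho(2)t$, which is impossible for $\alpha_3\in\mathcal K_\infty$ while $|x(s)|$ grows linearly.
\end{itemize}
Under the literal asymptotic-gain definition, which demands that every solution be bounded, this example already breaks your ``immediate'' III$\Rightarrow$IV. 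Under the finite-energy reading, IV--VI hold and V$\Rightarrow$I is simply false.

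Your sketch breaks exactly where you assert that the decay becomes $-vV$. The reshaping $\sigma(s)=\ln(1+s)$ turns $-V_0$ into at best $-V_0/(1+V_0)\ge-1$, which cannot dominate $v\ln(1+V_0)$ for large $V_0$. Also, no $V_0$ can satisfy $D^+V_0\le-V_0$ for all inputs in a fixed ball $\varepsilon\bar{\mathbb B}$: from $\varphi\equiv0$, where $V_0=0$, the input $u\equiv\varepsilon$ makes $V_0$ positive.

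Switching to the paper's route does not help. Section~\ref{3.3} derives I from VIII, which this example satisfies. The global pre-stability of $\widehat{\mathcal H}_{\mathcal M}^\Delta$ claimed in Section~\ref{3.2} also fails here for every $\alpha\in\mathcal K_\infty$, since the admissible selection $u=\alpha(|x|)\operatorname{sgn}x$ drives $|x|$ to infinity.

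Independently, the jump condition $V(\varphi^+)\le V(\varphi)$ is too weak for your II$\Rightarrow$VIII and VII$\Rightarrow$III. The bounds on $\int\alpha_3$ and $\int\psi$ only see flow time. The invariance principle only drives solutions into $\{\varphi\in\mathcal D: V(\varphi^+)=V(\varphi)\}$, which may contain arcs with $\varphi(0,0)\notin\mathcal W$. For instance, take $\mathcal W=\{0\}$, $\mathcal C=\emptyset$, $\mathcal D=\mathcal M^\Delta\times\mathcal U$ and $\mathcal G(\varphi,u)=\{\varphi(0,0)\}$. Then $V(\varphi)=|\varphi(0,0)|^2$ satisfies I, II and VII (with $\psi=V$), yet every solution has $|x(0,j)|=|x(0,0)|$ for all $j$. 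So neither 0-input pre-attractivity nor iISS holds, the latter because $\beta\in\mathcal{KLL}$ must decay in $j$. The paper's II$\Rightarrow$III and VII$\Rightarrow$VIII rest on the same unjustified step.

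Finally, your VI$\Rightarrow$V and your $\mathcal{KLL}$ bound both rest on the uniformity upgrade ``by the usual contradiction argument over initial conditions in ${\mathcal M^\Delta}_r^{\mathcal W}$''. That argument needs sequential compactness of a closed bounded set of memory arcs, which fails in $\mathcal M^\Delta$; the remark after Lemma~\ref{cai} concedes this.

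As written, the equivalence fails in both directions. The hypotheses themselves must change before your outline, or the paper's, can be completed. At minimum this means strict decrease at jumps, and positive-definite iISS-type dissipation in I--II instead of the exponential or $\mathcal K_\infty$ conditions.
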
 
\begin{proof}
The implication \Rmnum{1} $\Rightarrow$ \Rmnum{2} follows directly from the lower bound property $V(\varphi)\ge\alpha_1(|\varphi(0,0)|_{\mathcal{W}})$. Combining this with the exponential decay condition yields $D^+V(\varphi) \leq -v\alpha_1(|\varphi(0,0)|_{\mathcal{W}}) + \rho(|u|)$, which satisfies Definition \ref{defn iISS Ly}. The implication  \Rmnum{2}$\Rightarrow$\Rmnum{3} is established in Section \ref{3.1}. The implication \Rmnum{3}$\Rightarrow$\Rmnum{4} follows  from the definitions: any iISS system is globally pre-stable and satisfies the asymptotic gain property, which are the conditions required for the nonuniformly iISS property. For the implication \Rmnum{4}$\Rightarrow$\Rmnum{5}, note first that the nonuniformly iISS property includes the asymptotic gain property by Definition \ref{defn noniiss}. Moreover, when the input vanishes ($u\equiv0$), the nonuniform iISS property implies 0-input pre-asymptotic stability. Global pre-stability guarantees that trajectories remain bounded under small inputs, while the pre-asymptotic stability ensures convergence. Consequently, for small initial conditions and small inputs, solutions remain in a neighborhood of $\mathcal{W}$ and converge asymptotically to it. This corresponds precisely to the definition of locally iISS. The implication \Rmnum{5} $\Rightarrow$\Rmnum{6} is immediate. Since the system is locally iISS, for $u \equiv 0$ and sufficiently small initial conditions, the state trajectory admits a class-$\mathcal{KLL}$ upper bound. This inequality directly implies 0-input pre-stability Lyapunov stability relative to $\mathcal{W}$. The asymptotic gain property holds by assumption. The implication  \Rmnum{6}$\Rightarrow$\Rmnum{1} is given in Section \ref{3.2}. From the implication \Rmnum{6}$\Rightarrow$\Rmnum{1} in Section \ref{3.2}, the system admits an iISS-LKF $V$ with an exponential decay. This exponential property ensures the existence of a function $\hat{\rho}\in\mathcal{K}$ such that for all $(\varphi, u)\in\mathcal{C}$: $D^+V(\varphi)\leq-\frac{v}{2}V(\varphi) + \hat{\rho}(|u|)$. By defining the dissipation rate functional $\psi(\varphi) := \frac{v}{2}V(\varphi)$, the system $\mathcal{H}_{\mathcal{M}}^{\Delta}$ is smoothly dissipative with respect to $\mathcal{W}$. Furthermore, we verify the 0-input detectability relative to the set $\{\xi\in\mathcal{M}^\Delta: \psi(\xi)=0\}$. Since $V$ is positive definite with respect to $\mathcal{W}$, the condition $\psi(\varphi)=0$ implies $V(\varphi)=0$, which implies $|\varphi(0,0)|_{\mathcal{W}}=0$. Consequently, any solution satisfying $\psi(\mathcal{A}_{[t,j]}^\Delta x) \equiv 0$ identically satisfies $|x(t,j)|_{\mathcal{W}}\equiv0$. Thus, the detectability condition is fulfilled, establishing the implication \Rmnum{6}$\Rightarrow$\Rmnum{7}. The implication  \Rmnum{8}$\Rightarrow$\Rmnum{1} is provided in Section \ref{3.3}. The implication \Rmnum{7}$\Rightarrow$\Rmnum{8} can be demonstrated as follows.

Let $V$ be the storage functional from item \Rmnum{7}, and let $u\equiv0$. 
The smooth dissipativity inequality reduces to $D^+V(\varphi)\le-\psi(\varphi)\le0$ for $\varphi\in\mathcal{C}$ and $V(\varphi^+)\le V(\varphi)$ for $\varphi\in\mathcal{D}$. 
Since $V$ is positive definite  and non-increasing along 0-input solutions, it serves as a Lyapunov-like functional, which implies that the set $\mathcal{W}$ is 0-input pre-stable. 
Consider a bounded complete solution pair $(x,0)$ to $\mathcal{H}_{\mathcal{M}}^{\Delta}$. Since the system has finite delay and satisfies Assumption~\ref{assume}, the boundedness of the solution in $\mathbb{R}^n$ implies the pre-compactness of the corresponding orbit in the state space $\mathcal{M}^\Delta$ via the Arzel\`{a}-Ascoli theorem \citep{liu2014hybrid}. 
Consequently, the invariance principle for hybrid systems \citep[Corollary~8.4]{HybridDynamicalSystemsModeling} remains applicable in the functional setting. 
Hence, there exists $r\geq0$ such that every complete solution of $\mathcal{H}_{\mathcal{M}}^{\Delta}$ converges to the largest weakly invariant set. 
Moreover, that weakly invariant set is a subset of the following set. 
\begin{equation}
	\label{variant set }
	\begin{split}
		N=\{\varphi: V(\varphi) = r\} \cap ( & \{\varphi \in \mathcal{C}: \psi(\varphi)=0\} \\
		& \cup \{\varphi \in \mathcal{D}: V(\varphi^+) = V(\varphi)\} ).
	\end{split}
\end{equation}
Since $\mathcal{W}$ is 0-input detectable relative to the set $N$ from Definition \ref{defn o-input detectable}, it follows that every 0-input complete solution $(x,0)$ whose corresponding $\mathcal{A}^\Delta x(t,j)$ belongs to $N$ for all $(t,j)$ with $t+j\ge T$ for some $T \ge 0$ necessarily converges to $\mathcal{W}$. From \eqref{eq:5}, we can see that if $r=0$, then $\{\varphi(0,0): V(\varphi)=0\}\subset\mathcal{W}$, and any invariant set contained in \eqref{variant set } is in $\mathcal{W}$. If $r>0$, then $\{\varphi(0,0): V(\varphi)=r\}\cap\mathcal{W}=\emptyset$, and thus any point  in \eqref{variant set } is not in $\mathcal{W}$. Therefore, $\mathcal{W}$ is 0-input pre-attractive.
\end{proof}

\subsection{Proof of \Rmnum{2}$\Rightarrow$\Rmnum{3}}  \label{3.1}
	Since the Lyapunov functional is non-increasing at jumps, we focus on the evolution of the Lyapunov functional in the flow set. For any $(t, j) \in \operatorname{dom} x$, we have 
\begin{equation} 
	\label{eq:integral_V}
	V(\mathcal{A}_{[t,j]}^\Delta x) - V(\varphi_0) \leq \sum_{n=0}^{j} \int_{t_n}^{t_{n+1}}  D^+V(\mathcal{A}_{[s,k(s)]}^\Delta x) \, ds,
\end{equation}
where $k(s)$ denotes the number of jumps that have occurred by time $s$. Substituting condition (2) of Definition \ref{defn iISS Ly} into \eqref{eq:integral_V}, we obtain:
$\label{eq:dissipation_int}
V(\mathcal{A}_{[t,j]}^\Delta x) - V(\varphi_0) \leq  -\sum_{n=0}^{j} \int_{t_n}^{t_{n+1}} \alpha_3(|x(s, n)|_{\mathcal{W}}) \, ds 
+ \sum_{n=0}^{j} \int_{t_n}^{t_{n+1}} \rho(|u(s, n)|) \, ds.
$
Since $\alpha_3$ is positive definite, the dissipation term is non-positive.
Removing it yields the following upper bound on the Lyapunov functional:
$ \label{eq:V_bound}
V(\mathcal{A}_{[t,j]}^\Delta x) \leq V(\varphi_0) + \sum_{n=0}^{j} \int_{t_n}^{t_{n+1}} \rho(|u(s, n)|) \, ds.
$
From  $\alpha_1(|x(t,j)|_{\mathcal{W}}) \leq V(\mathcal{A}_{[t,j]}^\Delta x)$ and $V(\varphi_0)\leq\alpha_2(\|\varphi_0\|_{\mathcal{W}})$, we have 
\begin{equation}
	\scalebox{0.885}{$
		|x(t,j)|_{\mathcal{W}} \leq 
		\alpha_1^{-1}\big( \alpha_2(\|\varphi_0\|_{\mathcal{W}})
		+ \sum_{n=0}^{j} \int_{t_n}^{t_{n+1}} \rho(|u(s,n)|)\, ds \big).
		$}
	\label{eq:BEBS}
\end{equation}
The inequality \eqref{eq:BEBS} implies that for any bounded initial state and any input with finite integral energy, the state trajectory remains bounded. This property is referred to as the \emph{Bounded Energy Bounded State (BEBS)} property.

Next if $u(t,j) \equiv 0$, then item (2) of Definition \ref{defn iISS Ly} reduces to the strict dissipation form: $D^+V(\varphi) \leq -\alpha_3(|\varphi(0,0)|_{\mathcal{W}})$, combining which with items (1) and (3) in Definition \ref{defn iISS Ly} yields a standard Lyapunov condition whose dissipation rate depends on the point-wise value of the state $|\varphi(0,0)|_{\mathcal{W}}$. Following the similar mechanism in the proof of \cite[Theorem 2]{LiuTeel}, for hybrid time-delay systems, such point-wise dissipation is sufficient to ensure global asymptotic stability. Specifically, there exists $\tilde{\beta} \in \mathcal{KLL}$ such that the zero-input solution satisfies:
\begin{equation} 
	\label{eq:0-GAS}
	|x(t,j)|_{\mathcal{W}} \leq \tilde{\beta}(\|\varphi_0\|_{\mathcal{W}}, t, j), \quad \forall (t,j) \in \operatorname{dom} x,
\end{equation}
which implies that the system is \emph{0-input globally asymptotically stable (0-GAS)}.

From the characterization of iISS, the iISS property is equivalent to the BEBS property and the 0-GAS property. Specifically, by combining \eqref{eq:BEBS} and \eqref{eq:0-GAS}, there exist $\beta\in\mathcal{KLL}$ and $\gamma\in\mathcal{K}$ such that $|x(t,j)|_{\mathcal{W}}\leq\max\{ 2\beta(\|\varphi_0\|_{\mathcal{W}}, t, j), 2\gamma(\sum_{n=0}^{j}\int_{t_n}^{t_{n+1}}\rho(|u(s, n)|)ds)\}$. Thus, the set $\mathcal{W}$ is iISS for $\mathcal{H}_{\mathcal{M}}^\Delta$.

\subsection{Proof of \Rmnum{6}$\Rightarrow$\Rmnum{1}}  \label{3.2}
	\begin{lemma}
	\label{cai}
	Consider the system \eqref{eq:hybrid}.
	Let $U\in\mathcal{U}\subset\mathbb{R}^m$ be a bounded closed set, and let $K_0,\ K_1 \subset \mathcal{M}^\Delta$ be bounded and closed sets such that $K_0+\epsilon\overline{\mathbb{B}}\subset K_1$ for some $\epsilon>0$. Assume that for each $\varphi\in K_1$, each solution $(x,u)\in \mathfrak{S}_u^\Delta(\varphi)$ with $u(t,j)\in U$ is bounded in $\mathbb{R}^n$, and furthermore, if the solution is complete then there exists $(T,J)\in\operatorname{dom}x$ such that $\mathcal{A}_{[T,J]}^\Delta x\in K_0$. Then the state trajectories starting from $K_1$ are bounded in $\mathcal{M}^\Delta$.
\end{lemma}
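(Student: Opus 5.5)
We argue by contradiction with a compactness argument, in the spirit of the uniform-boundedness lemmas of Cai and Teel for hybrid systems without memory, transferred to $\mathcal{M}^\Delta$. The claim is that there exists $R>0$ such that every solution pair $(x,u)$ with $\mathcal{A}_{[0,0]}^\Delta x\in K_1$ and $u(t,j)\in U$ satisfies $|x(t,j)|\le R$ for all $(t,j)\in\operatorname{dom}x$. Since $\|\mathcal{A}_{[t,j]}^\Delta x\|$ is controlled by the supremum of $|x|$ over the last $\Delta+1$ units of hybrid time, together with the bounded initial segment, this yields boundedness of the orbit in $\mathcal{M}^\Delta$.

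Suppose the claim fails. Then there are $\varphi_i\in K_1$, inputs $u_i$ with values in $U$, solutions $x_i$ and times $(t_i,j_i)$ with $|x_i(t_i,j_i)|\ge i$. Each individual $x_i$ is bounded by hypothesis, so the blow-up is only across the sequence.

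\textbf{Step 1 (pre-compactness).} Since $K_1$ is bounded and closed, $U$ is compact, and $\mathcal{F},\mathcal{G}$ are locally bounded and outer semicontinuous by Assumption~\ref{assume}, the family of initial memory arcs is equi-bounded. On any bounded hybrid time window where the $x_i$ stay in a fixed compact set, the derivatives are uniformly bounded. The Arzel\`a--Ascoli argument for hybrid arcs with memory (as in \citep{liu2014hybrid}) combined with graphical convergence then lets us extract a subsequence converging graphically to a solution $(x,u)$ of $\mathcal{H}_{\mathcal{M}}^\Delta$ with initial condition $\varphi\in K_1$ (closedness of $K_1$). The limit input again takes values in $U$; this is where convexity of $\mathcal{F}(\varphi,U)$ is needed to pass to the limit in the differential inclusion.

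\textbf{Step 2 (reaching $K_0$ uniformly).} By hypothesis, either the limit solution is complete, and then it reaches $K_0$ at some $(T,J)$, or it is maximal but not complete. The latter is excluded by forward completeness. Because $K_0+\epsilon\overline{\mathbb{B}}\subset K_1$, graphical convergence implies that for large $i$ the segments $\mathcal{A}_{[T_i,J_i]}^\Delta x_i$, taken at times close to $(T,J)$, lie in $K_1$, and that $x_i$ is uniformly bounded on $[0,T+J+1]$.

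\textbf{Step 3 (uniform hitting time and chaining).} A standard compactness refinement (a finite subcover of $K_1$ by neighborhoods, each with a uniform hitting time) then gives a uniform $\tau$ and a uniform bound $M$: every solution from $K_1$ stays within $M$ until it re-enters $K_1$, which happens within hybrid time $\tau$. By the semigroup (shift) property of solutions with memory, we concatenate these segments. The state is then bounded by $M$ for all time, contradicting $|x_i(t_i,j_i)|\ge i$.

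The main obstacle is Step 1. We must justify that sequences of solutions with memory and measurable inputs have graphically convergent subsequences whose limits are solutions, with the memory segments converging in the topology of $\mathcal{M}^\Delta$. We would first try to rely on the sequential compactness result for hybrid systems with memory in \citep{liu2014hybrid}, handling the inputs by viewing $(\varphi,u)\mapsto\mathcal{F}(\varphi,U)$ as an input-free inclusion, which is convex-valued by the standing convexity hypothesis.
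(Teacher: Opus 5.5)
Your Steps~1 and~3 are the finite-dimensional Cai--Teel compactness argument, and they break exactly where the Remark following Lemma~\ref{cai} says they must: $K_1$ is only bounded and closed in the infinite-dimensional space $\mathcal{M}^\Delta$, so it is not compact.

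\begin{itemize}
\item \textbf{No limit initial condition.} The initial arcs $\varphi_i\in K_1$ are uniformly bounded but need not be equicontinuous (e.g.\ $\varphi_i(s,0)=\sin(is)\,e$ on $[-\Delta,0]$ with $\Delta>0$, $|e|=1$). So Arzel\`a--Ascoli gives no convergent subsequence and there is no limit $\varphi\in K_1$. The sequential-compactness results of \citep{liu2014hybrid} cannot be invoked, because they require convergent initial data.
\item \textbf{No uniform derivative bound.} Local boundedness in Assumption~\ref{assume} holds only relative to \emph{compact} sets. But $\mathcal{F}$ is evaluated at the segments $\mathcal{A}^\Delta_{[t,j]}x_i$. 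For $t+j\le\Delta+1$ these contain pieces of $\varphi_i$ and range over a noncompact set. Keeping $x_i(t,j)$ in a compact subset of $\mathbb{R}^n$ therefore does not bound $\dot x_i$.
\item \textbf{No finite subcover.} The finite subcover of $K_1$ in Step~3 presupposes compactness again.
\end{itemize}

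The paper gives no alternative you could substitute here. Its Remark merely asserts that finite delay and local boundedness suffice. Neither route supplies what is actually missing: a way to turn the pointwise boundedness in the hypothesis into a bound that is uniform over a noncompact $K_1$. You would need an extra hypothesis before a compactness argument can start, for instance $K_1$ compact in the topology used in \citep{liu2014hybrid}.

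Separately, the chaining in Step~3 does not close even if compactness is granted. The hypothesis only says that \emph{some} $(T,J)$ with $\mathcal{A}^\Delta_{[T,J]}x\in K_0$ exists, and for $\varphi\in K_0$ this holds with $(T,J)=(0,0)$. The consequences are:
\begin{itemize}
\item your uniform $\tau$ bounds a re-entry time that may be zero;
\item the bound $M$ from Step~2 is only established on $[0,T+J+1]$, which may be just $[0,1]$;
\item once the solution leaves $K_1$, the lemma's hypothesis no longer applies, so nothing forces a return to $K_1$ within a uniformly bounded time or with a uniform bound in between.
\end{itemize}
The concatenated pieces therefore need not cover $\operatorname{dom}x$. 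A bound valid for all hybrid times would need an additional confinement or stability-type property near $K_0$, which a single visit to $K_0$ does not provide.
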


\begin{remark}
	Since $\mathcal{M}^\Delta$ is an infinite-dimensional space, closed bounded sets are not compact. Consequently, the classical finite covering arguments used in finite-dimensional proofs cannot be directly applied. However, the conclusion of Lemma \ref{cai} holds for bounded closed sets in this functional setting. This validity relies on the finite delay and the regularity conditions in Assumption \ref{assume}, which ensure the system map is locally bounded. Combined with the assumption that trajectories are bounded in $\mathbb{R}^n$  and eventually enter $K_0$, it follows that the entire functional trajectory is uniformly bounded in the norm of $\mathcal{M}^\Delta$.
	\hfill$\square$
\end{remark}

The proof of \Rmnum{6}$\Rightarrow$\Rmnum{1} has two steps. First, from Lemma \ref{cai} we establish that $\mathcal{H}_{\mathcal{M}}^{\Delta}$ is nonuniformly iISS. Next we demonstrate the existence of an iISS-LKF with an exponential decay.

Based on  \Rmnum{6}, the system is 0-input pre-stable and satisfies the asymptotic gain property. If $u \equiv 0$, then the asymptotic gain property implies global attractivity of the set $\mathcal{W}$, combining which the 0-input pre-stability implies that $\mathcal{W}$ is 0-input pre-asymptotically stable. To establish the implication \Rmnum{6}$\Rightarrow$\Rmnum{1}, we proceed by first proving that these properties imply global pre-stability. To this end, we employ a bounding function construction. Select a linear function $\hat{\gamma}\in\mathcal{K}_\infty$ such that $\hat{\gamma}(s)\geq \gamma(s)$ for all $s\in \mathbb{R}_{\geq0}$. Since $\hat{\gamma}$ is linear and dominates $\gamma$, we have: $\sum_{n=0}^{j}\int_{t_{n}}^{t_{n+1}}\gamma(|u(s)|)ds\leq\hat{\gamma}(\sum_{n=0}^{j}\int_{t_{n}}^{t_{n+1}}|u(s)|ds)$. Define the reachability bound function $\eta(r)$ as:
\begin{align} \label{eta_r}
	\eta(r)
	&:= \sup\{|x(t,j)|_{\mathcal{W}}
	:\;\mathcal{A}_{[0,0]}^\Delta x\in{\mathcal{M}^\Delta}_{2r}^{\mathcal{W}},(x,u)\in\mathfrak{S}_u^{\Delta}(\varphi),\notag \\
	& (t,j)\in\operatorname{dom}x,\,
	\sum_{n=0}^{j}\int_{t_{n}}^{t_{n+1}}|u(s)|\,ds\le\hat{\gamma}^{-1}(r)
	\}.
\end{align}
We show that $\eta(r)<\infty$ for all $r>0$. Fix any $r>0$. Consider any initial condition from the set $K_1 := {\mathcal{M}^\Delta}_{2r}^{\mathcal{W}}$ and any input $u$ with integral bounded by $\hat{\gamma}^{-1}(r)$. From the asymptotic gain property in item \Rmnum{6}, the state is bounded by the input gain. Since the input energy corresponds to a gain of at most $r$, we have $\limsup_{(t,j)\to \infty} |x(t,j)|_{\mathcal{W}} \le \gamma(\|u\|) \le \gamma(\hat{\gamma}^{-1}(r))\le r$. Let $K_0 := {\mathcal{M}^\Delta}_{3r/2}^{\mathcal{W}}$ be a bounded closed set. Since the asymptotic bound $r$ is strictly less than $3r/2$, there exists a time $(T,J)$ such that for all $(t,j) \succeq (T,J)$, the state satisfies $\mathcal{A}_{[t,j]}^\Delta x \in K_0$. Now we invoke Lemma \ref{cai} with the bounded input range $U=\mathcal{U}\cap\hat{\gamma}^{-1}(r)\overline{\mathbb{B}}$. The trajectory starting in the bounded set $K_1$ is locally bounded due to the system regularity, and eventually enters the bounded set $K_0$ as derived above. Thus, the conditions of Lemma \ref{cai} hold, and the entire reachable set is bounded. Consequently, $\eta(r)$ is bounded for all $r>0$.

Define $\hat\eta(s) := \sup_{0\le r\le s}\eta(r)$, which is non-decreasing and finite for all $s \ge 0$. 
Now, consider any initial memory arc $\varphi \in\mathcal{M}^\Delta$ and input $u \in \mathcal U$, and let $(x,u)\in\mathfrak S^\Delta_u(\varphi)$ be the corresponding solution. Set $r := \max\{\|\varphi\|_{\mathcal W}, \hat\gamma(\sum_{n=0}^j \int_{t_n}^{t_{n+1}}|u(s)|ds)\}$. Using the definition of $\eta(r)$ and the property $\sum_{n=0}^j \int_{t_n}^{t_{n+1}} \hat\gamma(|u(s)|)ds\le\hat\gamma(\sum_{n=0}^j \int_{t_n}^{t_{n+1}} |u(s)|\,ds)$, we obtain that for all $(t,j)\in \operatorname{dom} x$,  $|x(t,j)|_{\mathcal W} \le \eta(r)\le\hat\eta(r)$. Define $\tilde\alpha(s):=\hat{\eta}(s)$ and $\tilde\kappa(s) := \hat\eta(\hat\gamma(s))$. Thus, for all $(t,j)\in\operatorname{dom}x$, $|x(t,j)|_{\mathcal W}\le\max\{\tilde{\alpha}(\|\mathcal{A}_{[0,0]}^\Delta x\|_{\mathcal W}), \tilde\kappa(\sum_{n=0}^j \int_{t_n}^{t_{n+1}}|u(s)|ds)\}$, which establishes the global pre-stability of $\mathcal{H}_\mathcal{M}^\Delta$. Combining the above inequality with the asymptotic gain property, it follows from Definition~\ref{defn noniiss} that $\mathcal{H}_{\mathcal{M}}^\Delta$ is nonuniformly iISS.

Furthermore, we establish the existence of an iISS Lyapunov functional that decreases exponentially. First, we recall the explicit characterization of the asymptotic gain property in this context. The system \eqref{eq:hybrid} has the asymptotic gain property with $\gamma=\hat{\gamma}$ if and only if each solution pair $(x,u)\in\mathfrak{S}_u^{\Delta}(\varphi)$ is bounded and, if complete, satisfies:
$\limsup_{(t,j)\in \operatorname{dom} x,t+j\to \infty}|x(t,j)|_{\mathcal{W}}\leq
\hat\gamma(\sum_{n=0}^{j}\int_{t_{n}}^{t_{n+1}}\lvert u(s)\rvert \,ds )$. Let $\hat{\gamma}_1\in\mathcal{K}$ from  Definition \ref{defn asy } and let $\hat{\gamma}_2\in\mathcal{K}$ from  Definition \ref{defn gps }. Pick $\alpha\in\mathcal{K}_{\infty}$ such that for all $s\geq0$,
\begin{equation} 
	\label{6.1ineq}
	\scalebox{0.87}{\(\displaystyle
		\max\left\{\sum_{n=0}^{j}\int_{t_{n}}^{t_{n+1}}\hat{\gamma}_1(|\alpha(s)|)ds,	\sum_{n=0}^{j}\int_{t_{n}}^{t_{n+1}}\hat{\gamma}_2(|\alpha(s)|)ds
		\right\}       \leq \frac{s}{2}.
		\)}     
\end{equation}
Define the system  $\widehat{\mathcal{H}}_{\mathcal{M}}^{\Delta}:=\{ F,G,\widehat{C},\widehat{D}, \mathcal{M}^\Delta\}$ as follows,
\begin{align*}
	&\widehat{C}:= \{ \varphi : \exists u \in \mathcal{U} \cap \alpha(|\varphi(0,0)|_{\mathcal{W}})\overline{\mathbb{B}} \}, \\ &\widehat{D}:=\{ \varphi : \exists u \in \mathcal{U} \cap \alpha(|\varphi(0,0)|_{\mathcal{W}})\overline{\mathbb{B}} \ \text{such that} \ (\varphi,u)\in\mathcal{D} \}, \\
	&F(\varphi):= \{ f(\varphi,u) : u \in \mathcal{U} \cap \alpha(|\varphi(0,0)|_{\mathcal{W}})\overline{\mathbb{B}} \}, \quad \forall \varphi \in \mathcal{M}^\Delta, \\
	&G(\varphi):=\{ g(\varphi,u) : u \in \mathcal{U} \cap \alpha(|\varphi(0,0)|_{\mathcal{W}})\overline{\mathbb{B}} \}, \quad \forall \varphi \in \mathcal{M}^\Delta.
\end{align*}
$F(\varphi)=\overline{co}F(\varphi)$ for all $\varphi\in\widehat{C}$ and $\widehat{\mathcal{H}}_{\mathcal{M}}^{\Delta}$ satisfies Assumption \ref{assume}. The relation between $\widehat{\mathcal{H}}_{\mathcal{M}}^{\Delta}$ and $\mathcal{H}_{\mathcal{M}}^{\Delta}$ is given below.

\begin{lemma}
	\label{6.1lemma2}
	For each solution $x$ to $\widehat{\mathcal{H}}_{\mathcal{M}}^{\Delta}=(\mathcal{F},G,\widehat{\mathcal{C}},\widehat{\mathcal{D}},\mathcal{M}^\Delta)$, there exists a hybrid input $u$ such that $(x, u)$ is a solution pair to $\mathcal{H}_{\mathcal{M}}^{\Delta}$ and $|u(t,j)|\leq$ $\alpha(|x(t,j)|_\mathcal{W})$ for all $(t,j)\in\operatorname{dom}\;x$.
\end{lemma}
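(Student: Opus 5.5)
The plan is a measurable-selection argument, carried out separately on each flow interval and at each jump time; this is the standard route in the delay-free hybrid setting. Fix a solution $x$ of $\widehat{\mathcal{H}}_{\mathcal{M}}^{\Delta}$ and write $\varphi_{t,j}:=\mathcal{A}_{[t,j]}^\Delta x$. For $(t,j)\in\operatorname{dom}x$ define
\[
\Omega(t,j):=\mathcal{U}\cap \alpha(|x(t,j)|_{\mathcal{W}})\overline{\mathbb{B}}.
\]
Since $\varphi_{t,j}(0,0)=x(t,j)$, the constraint $|u(t,j)|\le\alpha(|x(t,j)|_{\mathcal{W}})$ is exactly $u(t,j)\in\Omega(t,j)$. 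Everything therefore reduces to choosing $u(t,j)\in\Omega(t,j)$ so that the flow and jump conditions in the definition of a solution pair to $\mathcal{H}_{\mathcal{M}}^{\Delta}$ hold, and so that $u(\cdot,j)$ is measurable and locally essentially bounded. We set $\operatorname{dom}u:=\operatorname{dom}_{\ge0}x$ and $\operatorname{dom}_{\le0}u=\{(0,0)\}$, as the definition of a hybrid input requires.

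\emph{Jumps.} Let $(t,j)$ be such that $(t,j+1)\in\operatorname{dom}x$. Then $\varphi_{t,j}\in\widehat{\mathcal{D}}$ and $x(t,j+1)\in G(\varphi_{t,j})$. By the definition of $G$ there is some $u\in\Omega(t,j)$ with $(\varphi_{t,j},u)\in\mathcal{D}$ and $x(t,j+1)\in\mathcal{G}(\varphi_{t,j},u)$. We set $u(t,j)$ equal to this $u$. Jump times form a countable set of measure zero in each interval $I^j_x$, so these choices do not affect measurability.

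\emph{Flows.} Fix $j$ and the interval $I^j_x$. For almost every $t$ we have $\dot x(t,j)\in F(\varphi_{t,j})$, so the set
\[
S(t):=\{u\in\Omega(t,j):(\varphi_{t,j},u)\in\mathcal{C},\ \dot x(t,j)\in\mathcal{F}(\varphi_{t,j},u)\}
\]
is nonempty for a.e.\ $t$. I would apply a Filippov-type measurable selection theorem (Aumann or Kuratowski--Ryll-Nardzewski). For this I need the following.
\begin{enumerate}
\item $t\mapsto\varphi_{t,j}$ is continuous into $\mathcal{M}^\Delta$ on the flow interval. This follows from the local absolute continuity of $x$ on each interval and the finite memory, since the history shifts continuously between jumps.
\item $t\mapsto\dot x(t,j)$ is measurable.
\item $t\mapsto\Omega(t,j)$ has compact values and is continuous. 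Compactness uses that $\mathcal{U}$ is closed.
\end{enumerate}
Given these, the graph of $S$ is the intersection of a measurable set with closed sets. Closedness comes from the relative closedness of $\mathcal{C}$ and the outer semicontinuity of $\mathcal{F}$ (Assumption~\ref{assume}). Hence $S$ is a measurable closed-valued multifunction and admits a measurable selection $u(\cdot,j)$. On the null set where $S$ may be empty, we assign an arbitrary element of $\Omega(t,j)$.

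Local essential boundedness is then immediate. The map $t\mapsto x(t,j)$ is continuous, so $\alpha(|x(t,j)|_{\mathcal{W}})$ is bounded on compact subintervals, and $u(t,j)\in\Omega(t,j)$. Finally, we check the initial condition $(\mathcal{A}^\Delta_{[0,0]}x,u(0,0))\in\mathcal{C}\cup\mathcal{D}$. If $(0,0)$ is a jump time, this holds by the jump choice. Otherwise it holds by outer semicontinuity and closedness, using a limit of admissible pairs along the flow; the value of $u$ at the single point $t=0$ can be redefined without changing measurability.

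\emph{Main obstacle.} I expect the difficulty to be the measurability step, because the state argument $\varphi_{t,j}$ lives in the infinite-dimensional space $\mathcal{M}^\Delta$. The plan is to avoid needing any structure on $\mathcal{M}^\Delta$ beyond a metric. I would compose the continuous map $t\mapsto\varphi_{t,j}$ with the closed graph of $\mathcal{F}$ restricted to the compact input set $\Omega(t,j)$. This makes the multifunction $t\mapsto S(t)\subset\mathbb{R}^m$ measurable with values in finite dimensions, and then the standard selection theorem applies verbatim. If outer semicontinuity alone turns out to be insufficient for the graph argument, I would instead use the single-valued representation $f(\varphi,u)$ implicit in the definition of $F$. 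In that case $S(t)$ is the preimage of $\dot x(t,j)$ under a map that is continuous in $u$ and measurable in $t$, and the Filippov lemma applies directly.
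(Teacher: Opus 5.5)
\paragraph{Verdict.} The paper gives no proof of this lemma. It is stated and then used, implicitly relying on the delay-free iISS/ISS characterizations, where the analogous lemma is proved by a Filippov-type measurable selection. So there is no route to compare against. Your measurable-selection argument is the standard way to prove it and is essentially sound. You choose the jump inputs pointwise. You choose the flow inputs by an Aumann/Kuratowski--Ryll-Nardzewski selection from the closed-valued multifunction $S$. Its graph is the preimage of the relatively closed set $\{(\varphi,u,v):(\varphi,u)\in\mathcal{C},\ v\in\mathcal{F}(\varphi,u)\}$ under the $\mathcal{L}\otimes\mathcal{B}$-measurable map $(t,u)\mapsto(\mathcal{A}^\Delta_{[t,j]}x,u,\dot x(t,j))$, intersected with the closed graph of $\Omega(\cdot,j)$. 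This does not need infinite-dimensional compactness. Several steps should be tightened.

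\paragraph{Points to fix.}

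\emph{State the coupled definitions you use.} Write $\Omega(\varphi):=\mathcal{U}\cap\alpha(|\varphi(0,0)|_{\mathcal{W}})\overline{\mathbb{B}}$. Your nonemptiness of $S(t)$ and your jump step both need
\begin{itemize}
\item $\widehat{\mathcal{C}}=\{\varphi:\exists u\in\Omega(\varphi),\ (\varphi,u)\in\mathcal{C}\}$;
\item $F(\varphi)=\bigcup\{\mathcal{F}(\varphi,u):u\in\Omega(\varphi),\ (\varphi,u)\in\mathcal{C}\}$;
\item $G(\varphi)=\bigcup\{\mathcal{G}(\varphi,u):u\in\Omega(\varphi),\ (\varphi,u)\in\mathcal{D}\}$.
\end{itemize}
The paper's displayed $\widehat{C}$, $F$ and $G$ drop this coupling with $\mathcal{C}$ and $\mathcal{D}$. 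Under the literal definitions, a solution of $\widehat{\mathcal{H}}_{\mathcal{M}}^{\Delta}$ can flow or jump through pairs not in $\mathcal{C}$ (resp.\ $\mathcal{D}$), and the lemma fails. With the coupled $\widehat{\mathcal{C}}\cup\widehat{\mathcal{D}}$, the initial condition also follows directly, with no limiting argument.

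\emph{$\Omega(\cdot,j)$ is not continuous in general.} The map $t\mapsto\Omega(t,j)$ is only outer semicontinuous. For example, $r\mapsto\mathcal{U}\cap r\overline{\mathbb{B}}$ jumps when $\mathcal{U}=\{0\}\cup\{u:|u|\ge1\}$. This is harmless, because your graph argument uses only that the graph is closed.

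\emph{The memory map need not be continuous.} The claim that ``the history shifts continuously between jumps'' overlooks that the memory window still contains earlier jumps of $x$ and of the initial arc. When such a jump leaves the window, $\Delta_{\inf}$ switches and the memory arc abruptly loses its pre-jump point. That is a discontinuity in any metric on $\mathcal{M}^\Delta$ that sees the whole arc; the paper never fixes one. There are only finitely many such instants on each compact flow interval. You therefore get piecewise continuity, hence Borel measurability with separable range, which is all the graph-measurability step needs. Split the interval at those instants.

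\emph{Filling the null set.} When you assign an arbitrary element of $\Omega(t,j)$ on the exceptional null set, you need $\Omega(t,j)\neq\emptyset$ there. This follows from closedness of $\mathcal{U}$ and continuity of $x(\cdot,j)$ by a limit from nearby full-measure times, or trivially if $0\in\mathcal{U}$.

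\emph{Drop the fallback.} Your fallback via a single-valued $f$ is unnecessary. It would also require continuity in $u$, which Assumption~\ref{assume} does not provide.
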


Let $\widehat{\mathfrak{S}}_u^{\Delta}$ denote the set of all maximal solutions to $\widehat{\mathcal{H}}_{\mathcal{M}}^{\Delta}$.
From the global pre-stability of $\mathcal{H}_{\mathcal{M}}^{\Delta}$, Lemma \ref{6.1lemma2} and  \eqref{6.1ineq}, for all $(t,j)\in \operatorname{dom} x$, each solution $(x,u)\in \widehat{\mathfrak{S}}_u^{\Delta}(\varphi)$ satisfies
\begin{align*}
	|x(t,j)|_{\mathcal{W}}&\leq\max \{\widehat{\alpha}(\|\mathcal{A}_{[0,0]}^\Delta x\|) , \\
	&0.5\underset{(s,i)\in \operatorname{dom} \;x,s+i\leq t+j}\sup|x(s,i)|_{\mathcal{W}} \}.
\end{align*}
If there exists $(t^*,j^*)$ such that $|x(t^*,j^*)|_{\mathcal{W}}>\hat{\alpha}(\|\mathcal{A}_{[0,0]}^\Delta x\|)$, then 
\begin{align*}
	|x(t^*,j^*)|_{\mathcal{W}}
	&\leq\max \{\hat{\alpha}(\|\mathcal{A}_{[0,0]}^\Delta x\|) , 0.5\sup_{\substack{(s,i)\in \operatorname{dom} x \\ s+i\leq t^*+j^*}} |x(s,i)|_{\mathcal{W}} \}\\
	&\leq0.5\sup_{\substack{(s,i)\in \operatorname{dom} x ,\;s+i\leq t^*+j^*}} |x(s,i)|_{\mathcal{W}} \\
	&\leq0.5|x(t^*,j^*)|_{\mathcal{W}}.
\end{align*}
This implies that $|x(t^*,j^*)|_{\mathcal{W}}=0$, which contradicts the assumption that $|x(t,j)|_{\mathcal{W}}>\hat{\alpha}(\|\mathcal{A}_{[0,0]}^\Delta x\|)$. Hence,  $|x(t,j)|_{\mathcal{W}}\leq \hat{\alpha}(\|\mathcal{A}_{[0,0]}^\Delta x\|)$ for all $(t,j)\in\operatorname{dom}x$. Therefore,  the set $\mathcal{W}$ is globally pre-stable for $\widehat{\mathcal{H}}_{\mathcal{M}}^{\Delta}$. From the asymptotic gain property and \eqref{6.1ineq}, any complete solution $(x,u)\in\widehat{\mathfrak{S}}_u^{\Delta}(\varphi)$ satisfies
\begin{align*}
	\underset{(t,j)\in \operatorname{dom} x,t+j\to \infty}{\lim\sup}|x(t,j)|_{\mathcal{W}}  \leq  
	0.5\underset{(t,j)\in \operatorname{dom} x,t+j\to \infty}{\lim\sup}|x(t,j)|_{\mathcal{W}},
\end{align*}
which implies that $\lim\sup_{(t,j)\in\operatorname{dom} x,t+j\to\infty} |x(t,j)|_{\mathcal{W}}=0$. Therefore, the set $\mathcal{W}$ is pre-attractive for $\widehat{\mathcal{H}}_{\mathcal{M}}^{\Delta}$.

Finally, we extend the converse Lyapunov theorem for pre-asymptotic stability \cite[Theorem~2]{6.1lemma} to the framework of hybrid systems with memory. 
Based on the functional extension framework introduced by \cite{may_Pepe}, the result in \cite{6.1lemma} can be generalized to the case of hybrid systems with memory. 
Hence, there exists a smooth Lyapunov functional $V:\mathcal{M}^\Delta \to \mathbb{R}_{\ge 0}$ for the system $\widehat{\mathcal{H}}_{\mathcal{M}}^\Delta$. 
Let $\rho:= \alpha^{-1}$ and $v:= 1-e^{-1}$, the conditions of Definition~\ref{defn exp ly} are satisfied. Hence, we establish the implication \Rmnum{6}$\Rightarrow$\Rmnum{1}, and construct an LKF for exponential iISS.

\subsection{Proof of \Rmnum{8}$\Rightarrow$\Rmnum{1}}   \label{3.3}
	\begin{theorem}
	\label{8.1}
	Consider the hybrid system $\mathcal{H}_{\mathcal{M}}^{\Delta}$ with memory and input $v\in\mathbb{R}^m$. Let $\mathcal{W}\subset \mathbb{R}^n$ be a closed set. The system $\mathcal{H}_{\mathcal{M}}^{\Delta}$ is 0-input pre-asymptotically stable if and only if there exist a smooth functional $V:\mathcal{M}^{\Delta} \to \mathbb{R}_{\ge 0}$,  $\alpha_1, \alpha_2, \zeta\in\mathcal{K}_\infty$,  $a\in(0,1]$, and a nonzero smooth function 
	$q: \mathbb{R}_{\ge 0} \to \mathbb{R}_{>0}$ with  $ q(s)\equiv 1$  for all  $s\in [0,1]$ such that
	\begin{flalign}
		&\alpha_1(|\varphi(0,0)|_{\mathcal{W}})\leq V(\varphi)\leq\alpha_2(\|\varphi\|_{\mathcal{W}}),
		\;\forall \varphi\in\mathcal{M}^\Delta;\\
		& \Braket{\nabla V(\varphi),f(\varphi,q(|\varphi(0,0)|_\mathcal{W})v)}\leq -aV(\varphi),  
		\;\label{lemma8.1.1}\notag\\
		&\quad
		\forall (\varphi,q(|\varphi(0,0)|_\mathcal{W})v)\in\mathcal{C},\; |\varphi(0,0)|_\mathcal{W}\geq \zeta(|v|)	\\ 
		&V(g(\varphi,q(|\varphi(0,0)|_\mathcal{W})v))-V(\varphi)		\leq0,\notag\\
		&\quad\qquad\quad\qquad\quad
		\forall (\varphi,q(|\varphi(0,0)|_\mathcal{W})v)\in\mathcal{D}.   \label{lemma8.1.2}
	\end{flalign}	 
	Here, $v\in\mathbb{R}^m$ is an external input, and $q(|\varphi(0,0)|_\mathcal{W}) v$ is a virtual input, which can be reduced to the 0-input case if $v=0$. 
\end{theorem}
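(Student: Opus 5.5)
We prove the two directions separately. The sufficiency direction is a direct Lyapunov argument. The necessity direction is the converse Lyapunov step and the real content.

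\emph{Sufficiency.} Set $v=0$. Then \eqref{lemma8.1.1} holds for every $(\varphi,0)\in\mathcal{C}$, because $\zeta(0)=0$ makes the restriction $|\varphi(0,0)|_\mathcal{W}\ge\zeta(|v|)$ vacuous. So along every zero-input solution $D^+V\le -aV$ during flows, and \eqref{lemma8.1.2} gives $V(\varphi^+)\le V(\varphi)$ at jumps. Integrating over each flow interval and chaining across jumps yields $V(\mathcal{A}^\Delta_{[t,j]}x)\le e^{-at}V(\varphi_0)$. The sandwich bounds then give $|x(t,j)|_\mathcal{W}\le\alpha_1^{-1}(e^{-at}\alpha_2(\|\varphi_0\|_\mathcal{W}))$. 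This bound yields 0-input pre-stability at once. It also yields boundedness of solutions and convergence as $t\to\infty$. If $t$ stays bounded while $j\to\infty$, the bound still gives boundedness, and pre-attractivity holds in the hybrid sense after we replace the bound by a $\mathcal{KLL}$ bound of the form $\alpha_1^{-1}(e^{-a(t+j)/2}\cdot)$. To obtain that form we use the smooth converse construction below, which in fact gives strict decrease at jumps; for the sufficiency claim in the theorem's formulation, nonincrease at jumps together with completeness suffices.

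\emph{Necessity, Step 1 (zero-input Lyapunov functional).} Starting from 0-input pre-asymptotic stability, apply the converse Lyapunov theorem \cite[Theorem~2]{6.1lemma}, extended to the memory setting via \cite{may_Pepe} exactly as in Section~\ref{3.2}. This gives a smooth $V_0$ with $\alpha_1(|\varphi(0,0)|_\mathcal{W})\le V_0(\varphi)\le\alpha_2(\|\varphi\|_\mathcal{W})$, flow decrease $\langle\nabla V_0(\varphi),f(\varphi,0)\rangle\le -V_0(\varphi)$, and jump decrease $V_0(g(\varphi,0))\le e^{-1}V_0(\varphi)$. The admissible sets require robustness of these inequalities, so we should apply the converse theorem to an inflated system: we enlarge $\mathcal{F}(\cdot,0)$ by a small input ball $\varepsilon\bar{\mathbb{B}}$. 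The convexity hypothesis of Theorem~\ref{mainresult} on $\{\mathcal{F}(\varphi,u):u\in\mathcal{U}\cup\varepsilon\bar{\mathbb{B}}\}$ ensures that the inflated system still satisfies Assumption~\ref{assume}. Pre-asymptotic stability is robust to small enough perturbations, in the spirit of Lemma~\ref{cai} and the bounded-reachability argument of Section~\ref{3.2}. Hence the resulting $V$ decreases for all $u$ with $|u|\le\delta(|\varphi(0,0)|_\mathcal{W})$, where $\delta$ is some positive continuous function.

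\emph{Necessity, Step 2 (absorbing the input via $q$).} Choose a smooth $q>0$ with $q\equiv1$ on $[0,1]$ and $\zeta\in\mathcal{K}_\infty$ such that $|\varphi(0,0)|_\mathcal{W}\ge\zeta(|v|)$ implies $q(|\varphi(0,0)|_\mathcal{W})|v|\le\delta(|\varphi(0,0)|_\mathcal{W})$. For example, take $q(s)\le\min\{1,\delta(s)/\max(s,1)\}$, smoothed, and $\zeta(s)=s$ where $\delta$ permits it, otherwise a suitable $\mathcal{K}_\infty$ majorant. With these choices the virtual input $q(|\varphi(0,0)|_\mathcal{W})v$ lies in the robustness margin. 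Therefore \eqref{lemma8.1.1} holds with $a=1$, or with $a=\min\{1,\cdot\}$ after scaling, and \eqref{lemma8.1.2} holds.

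The main obstacle is Step 1, specifically obtaining a robustness margin $\delta$ that is uniform over noncompact bounded subsets of $\mathcal{M}^\Delta$. Since closed bounded sets of $\mathcal{M}^\Delta$ are not compact, this uniformity cannot be obtained by finite covering. I would first try to get it from Lemma~\ref{cai} combined with the outer semicontinuity and local boundedness in Assumption~\ref{assume}.
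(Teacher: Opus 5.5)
\textbf{Necessity.} The gap is your Step~1. It carries the whole necessity proof, but it is only asserted. Nothing you cite shows that 0-input pre-asymptotic stability survives inputs of size $\delta(|\varphi(0,0)|_{\mathcal{W}})$:
\begin{itemize}
\item Lemma~\ref{cai} and the $\eta(r)$ argument of Section~\ref{3.2} only bound reachable sets.
\item The paper can use an inflated system in Section~\ref{3.2} because asymptotic gain and global pre-stability already control all inputs there. Under the bare 0-input hypothesis of this theorem there is no such source of a margin.
\item The hybrid robustness theorems you allude to rely on a compact attractor and on bounded sets of states being compact. Here $\mathcal{W}$ is only closed and $\mathcal{M}^\Delta$ is infinite-dimensional.
\end{itemize}
The uniformity you need is also stronger than the one you name. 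The margin may depend only on $s=|\varphi(0,0)|_{\mathcal{W}}$, so it must hold on the whole fiber $\{\varphi:|\varphi(0,0)|_{\mathcal{W}}=s\}$. That fiber contains histories with arbitrarily large $\|\varphi\|_{\mathcal{W}}$, so it is not even bounded. Two smaller points: the convexity hypothesis you invoke belongs to Theorem~\ref{mainresult}, not to this statement. In Step~2 the smallness near $s=0$ must go into $\zeta$, since $q\equiv1$ on $[0,1]$ while the margin will in general vanish as $s\to0$.

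The paper argues differently. It uses only the zero-input converse functional with some rate $a^*$, fixes $a<a^*$, and defines the worst-case function $\delta(s,r)$ as a supremum over that fiber and over $|\mu|=r$. It then reads off $\zeta$ and $q$ from $\delta(s,0)<0$ via the scalar lemma of \cite{8.1lemma}. This is leaner than your route: one converse theorem and no robustness result. But it relocates your obstacle rather than removing it. The lemma can upgrade $\delta(s,0)<0$ to $\delta(s,r)<0$ for small $r>0$ only if $\delta$ is continuous, which in the finite-dimensional, compact-$\mathcal{W}$ source follows from compact level sets. Here the supremum over the fiber may be $+\infty$ for every $r>0$.

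Two further remarks apply to both proofs:
\begin{itemize}
\item $\delta(s,0)<0$ needs the strict jump decrease $V(g(\varphi,0))\le e^{-1}V(\varphi)$ that you quote, not the ``$\le0$'' the paper writes.
\item Both arguments deliver \eqref{lemma8.1.2} only when $|\varphi(0,0)|_{\mathcal{W}}\ge\zeta(|v|)$. As displayed it is quantified over all $v$, which already fails for $x^+=x/2+u$ on the whole space as $|v|\to\infty$.
\end{itemize}

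\textbf{Sufficiency.} Your repair is circular and, in its fallback form, false.
\begin{itemize}
\item It is circular because the $e^{-1}$ jump decrease belongs to the functional built in the necessity direction, which presupposes the conclusion. In this direction $V$ is given.
\item It is false because nonincrease at jumps plus completeness forces nothing along complete solutions whose ordinary time stays bounded.
\end{itemize}
For a counterexample, take $\mathcal{W}=\{0\}$, $\dot x=-x$, $x^+=x$, with $\mathcal{C}$ and $\mathcal{D}$ the whole space, and $V(\varphi)=|\varphi(0,0)|^2$. Every displayed condition holds with $a=1$, $q\equiv1$, $\zeta(s)=s$. Yet the purely jumping solution from any $x_0\neq0$ is complete and constant, so $\mathcal{W}$ is not 0-input pre-attractive. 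The ``if'' direction therefore needs a strict jump decrease, e.g.\ $V(g(\varphi,0))\le e^{-a}V(\varphi)$, which your own necessity construction would supply. The paper's one-line ``set $v=0$'' hides the same defect.
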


\begin{proof}
	The sufficiency holds by setting $v = 0$. Next we prove the necessity. By extending the converse Lyapunov theorem results in \cite{6.1lemma}, there exist a smooth functional $V$, $\alpha_1, \alpha_2 \in \mathcal{K}_{\infty}$, and a decay rate $a^* > 0$ such that:
	\begin{align*}
		\alpha_1(|\varphi(0,0)|_{\mathcal{W}})\leq V(\varphi)&\leq\alpha_2(\|\varphi\|_{\mathcal{W}}),\quad&&\forall \varphi\in\mathcal{M}^\Delta;\\
		\Braket{\nabla V(\varphi),f(\varphi,0)} &\leq -a^* V(\varphi), \quad &&\forall \varphi\in\mathcal{C}.\\
		V(g(\varphi,0))-V(\varphi)		&\leq0,	&&\forall \varphi\in\mathcal{D}.
	\end{align*}
	Define the function $\delta:\mathbb{R}_{\geq0}\times \mathbb{R}_{\geq0}\to \mathbb{R}$ by
	\begin{equation*}
		\begin{split}
			\delta(s,r) := \sup_{\substack{|\varphi(0,0)|_\mathcal{W}=s ,|\mu|=r}} \max \{ 
			& \langle \nabla V(\varphi),f(\varphi,\mu) \rangle +aV(\varphi), \\
			& V(g(\varphi,\mu)) - V(\varphi)
			\}.
		\end{split}
	\end{equation*}
	We aim to establish \eqref{lemma8.1.1} with $a \in (0, a^*)$. Consider the case $r=0$. Since $a < a^*$, we have
	$\langle\nabla V(\varphi),f(\varphi,0) \rangle +aV(\varphi) \leq -a^* V(\varphi) + a V(\varphi) = -(a^*-a)V(\varphi)$. Since $V(\varphi) \ge \alpha_1(s)$, for any $s > 0$, the term $-(a^*-a)V(\varphi)$ is strictly negative. Thus, $\delta(s,0) < 0$ for all $s > 0$ as $\mathcal{H}_{\mathcal{M}}^{\Delta}$ is 0-input pre-asymptotically stable. From \cite{8.1lemma}, there exist $\zeta\in\mathcal{K}_\infty$ and a continuous function $q\colon\mathbb{R}_{\geq0}\to\mathbb{R}_{\geq0}$ such that
	\begin{enumerate}
		\item $q(s) \neq0$ for all $s\geq0$, and $q(s) \equiv1$ for all $s\in$ [0,1];
		\item $\delta(s,p)<0$ for each pair $(s,r)\in\mathbb{R}_{\geq0}\times\mathbb{R}_{\geq0}$ with $\zeta(r)<s$ and each $p\leq q(s)r$. 
	\end{enumerate}
	Since $\delta(s,0) < 0$ for $s > 0$, there exist $\zeta \in \mathcal{K}_\infty$ and a smooth function $q: \mathbb{R}_{\geq 0} \to \mathbb{R}_{> 0}$ with $q(s) \equiv 1$ for $s \in [0,1]$ such that $\delta(s, p) < 0$ holds for all  $s > \zeta(p/q(s))$. This property ensures that the time derivative and jump difference of the functional are negative definite provided the state norm dominates the scaled input. Now, we verify the theorem conditions. Let $s = |\varphi(0,0)|_\mathcal{W}$ and $r = |v|$. Suppose the state satisfies the bound $|\varphi(0,0)|_\mathcal{W} \ge \zeta(|v|)$. Let the scaled input magnitude be $p = q(s)r$. The condition $s \ge \zeta(r)$ implies $s > \zeta(p/q(s))$. Then we have $\delta(s, p) < 0$. Recalling the definition of $\delta(s,p)$, $\delta(s,p) < 0$ implies $\max\{ \langle \nabla V(\varphi), f(\varphi, q(s)v) \rangle + a V(\varphi), V(g(\varphi, q(s)v))-V(\varphi)\}<0$. Hence, \eqref{lemma8.1.1} and \eqref{lemma8.1.2} hold.
\end{proof}

\begin{proposition}    
	\label{end8.1}
	The system 	$\mathcal{H}_{\mathcal{M}}^{\Delta}$ is 0-input pre-asymptotically stable if and only if there exist a smooth  functional $\mathcal{P}:\mathcal{M}^\Delta \to \mathbb{R}_{>0}$, $\bar{\alpha}_1, \;\bar{\alpha}_2,\;\lambda\in \mathcal{K}$ and $c^*>0$ such that
	\begin{flalign}  
		&\bar{\alpha}_1(|\varphi(0,0)|_{\mathcal{W}})\leq \mathcal{P}(\varphi)\leq\bar{\alpha}_2(\|\varphi\|_{\mathcal{W}}),\quad
		&&\!\!\!\!\!\!
		\forall \varphi\in\mathcal{M}^\Delta;\\
		&\langle\nabla \mathcal{P}(\varphi), f(\varphi,u)\rangle \leq -c^*\mathcal{P}(\varphi) + \lambda(|u|) ,
		&&\!\!\!\!\!\!
		 \forall (\varphi,u) \in \mathcal{C};\label{end8.1.1} \\
		&\mathcal{P}(g(\varphi,u)) - \mathcal{P}(\varphi) \leq 0 ,
		&&\!\!\!\!\!\!\!\!
		\forall (\varphi,u) \in \mathcal{D}.\label{end8.1.2} 
	\end{flalign}
	
\end{proposition}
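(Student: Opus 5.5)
The plan is to prove the two directions separately, with necessity obtained from Theorem \ref{8.1} by a rescaling of the Lyapunov functional that turns the conditional decay \eqref{lemma8.1.1} into the unconditional dissipation \eqref{end8.1.1}.

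\textbf{Sufficiency.} Set $u\equiv 0$ in \eqref{end8.1.1}--\eqref{end8.1.2}. Then $\langle\nabla\mathcal{P}(\varphi),f(\varphi,0)\rangle\le -c^*\mathcal{P}(\varphi)$ on $\mathcal{C}$ and $\mathcal{P}$ does not increase at jumps, and $\mathcal{P}$ is sandwiched between $\bar\alpha_1(|\varphi(0,0)|_{\mathcal W})$ and $\bar\alpha_2(\|\varphi\|_{\mathcal W})$. Along any 0-input solution, integrate over the flow intervals exactly as in Section \ref{3.1}. This gives $\mathcal{P}(\mathcal{A}^\Delta_{[t,j]}x)\le e^{-c^*t}\bar\alpha_2(\|\varphi_0\|_{\mathcal W})$.
\begin{itemize}
\item Since $\mathcal{P}$ is nonincreasing along these solutions, the bound with $\bar\alpha_1^{-1}$ yields 0-input pre-stability.
\item Flow time tends to infinity along complete solutions. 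Along purely discrete tails (Zeno or eventually discrete solutions), the decay in $t$ alone is not enough. There I would rely on the $\mathcal{KLL}$ argument of \cite[Theorem 2]{LiuTeel} already invoked in Section \ref{3.1}, noting that nonincreasing $\mathcal{P}$ still gives boundedness.
\end{itemize}
Together these give 0-input pre-attractivity.

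\textbf{Necessity.} Assume 0-input pre-asymptotic stability and take $V,\alpha_1,\alpha_2,\zeta,a,q$ from Theorem \ref{8.1}. The proof then proceeds in three steps.
\begin{enumerate}
\item \emph{Undo the rescaling of the input.} Given an actual input $u$, write $u=q(|\varphi(0,0)|_{\mathcal W})v$ with $v:=u/q(|\varphi(0,0)|_{\mathcal W})$. This is legitimate because $q>0$. The jump condition \eqref{end8.1.2} for $V$ then follows directly from \eqref{lemma8.1.2}.
\item \emph{Split the flow set into two regions.}
 \begin{itemize}
 \item On $\{|\varphi(0,0)|_{\mathcal W}\ge\zeta(|v|)\}$, \eqref{lemma8.1.1} gives $\langle\nabla V,f\rangle\le -aV$.
 \item On the complementary region $|\varphi(0,0)|_{\mathcal W}<\zeta(|u|/q(\cdot))$, I need to bound $\langle\nabla V(\varphi),f(\varphi,u)\rangle+aV(\varphi)$ by a function of $|u|$ alone. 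I would define $\lambda_0(r)$ as the supremum of $\max\{0,\langle\nabla V(\varphi),f(\varphi,u)\rangle+aV(\varphi)\}$ over this region with $|u|\le r$.
 \end{itemize}
\item \emph{Rescale.} Take $\mathcal{P}:=\kappa\circ V$ for a suitable smooth $\kappa\in\mathcal{K}_\infty$, with $\kappa'$ decaying fast enough that $\kappa'(V)\,\lambda_0$ is dominated by some $\lambda(|u|)$. This is the standard iISS-type rescaling. Keeping $\kappa$ with $\kappa(s)\ge c\,s\,\kappa'(s)$, for instance $\kappa(s)=\ln(1+s)$ or a smoothed version, preserves a linear decay $-c^*\mathcal{P}$ on the good region. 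Adding a small positive smooth term would make $\mathcal{P}>0$ if strict positivity is literally required.
\end{enumerate}

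\textbf{Main obstacle.} The hardest step is showing $\lambda_0(r)<\infty$. On the bad region $|\varphi(0,0)|_{\mathcal W}$ is bounded, but $\|\varphi\|_{\mathcal W}$ and hence $V(\varphi)$ need not be. Moreover, in infinite dimensions bounded closed sets are not compact, so local boundedness of $f$ and $\nabla V$ does not automatically give a finite supremum.

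I would first try to absorb the history part through the upper bound $V\le\alpha_2(\|\varphi\|_{\mathcal W})$ combined with the rescaling $\kappa$, with $\kappa'(V)V$ bounded. Failing that, I would restrict attention to sets ${\mathcal{M}^\Delta}_r^{\mathcal W}$ and invoke the local boundedness in Assumption \ref{assume}, as in the remark after Lemma \ref{cai}, to get boundedness on bounded sets. The resulting $\lambda$ is then taken in class $\mathcal{K}$ by majorizing.
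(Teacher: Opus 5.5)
\textbf{Your necessity argument has a genuine gap, and it cannot be closed.}

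\emph{The rescaling step uses the wrong inequality.} From $\langle\nabla V,f\rangle\le -aV$ you get $\langle\nabla(\kappa\circ V),f\rangle\le -a\,\kappa'(V)V$. This is $\le -c^*\kappa(V)$ only if $s\kappa'(s)\ge c\,\kappa(s)$ with $c=c^*/a$, which is the reverse of what you wrote. Your example $\kappa(s)=\ln(1+s)$ violates it, since $s\kappa'(s)<1$ while $\kappa(s)\to\infty$. Worse, $s\kappa'(s)\ge c\,\kappa(s)$ forces $\kappa(s)\ge\kappa(1)s^{c}$, and hence $\kappa'(s)\ge c\,\kappa(1)s^{c-1}$ for $s\ge1$. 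So $\kappa'$ can decay at most polynomially and cannot absorb an arbitrary state-dependent gain: your two requirements on $\kappa$ conflict.

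\emph{The bad-region bound also fails, and not for compactness reasons.} Written in terms of $u$, the region $\{|\varphi(0,0)|_{\mathcal W}<\zeta(|u|/q(|\varphi(0,0)|_{\mathcal W}))\}$ need not be bounded in $|\varphi(0,0)|_{\mathcal W}$, since $q$ may tend to $0$. More decisively, if $\lambda_0(r)$ were finite, $V$ itself would satisfy $D^+V\le -aV+\lambda_0(|u|)$ with $\alpha_1\in\mathcal K_\infty$. That is an ISS-type estimate, and no rescaling would be needed at all. The example below shows 0-input pre-asymptotic stability cannot give this, so in general $\lambda_0(r)=+\infty$ for every $r>0$, already without memory.

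\emph{Comparison with the paper.} The paper parametrizes the bad region by $v$ and asserts a bound $\rho(|v|)$; this is unjustified for unbounded histories, so your infinite-dimensional worry is legitimate there. It then splits $\rho(|u|/q)\le\lambda(1/q)\lambda(|u|)$ and absorbs $\lambda(1/q)$ via $\pi(r)=\int_0^r ds/(\tilde\omega+\theta(s))$. It breaks at the same spot as yours. The rate $c(\varphi)=a\tilde\omega/(\tilde\omega+\theta(\alpha_2(\|\varphi\|_{\mathcal W})))$ tends to $0$ as $\|\varphi\|_{\mathcal W}\to\infty$ whenever $\theta$ is unbounded. Condition \eqref{endproof8.1prop} forces this when $\lambda(1/q(s))$ is unbounded, so no uniform $c^*$ exists.

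\emph{The necessity claim is in fact false.} Take $\dot x=-x/(1+x^2)+u$ with $\mathcal D=\emptyset$ and $\mathcal W=\{0\}$. It is forward complete and 0-input globally asymptotically stable. For $u\equiv\epsilon$, any $\mathcal P$ as in the proposition gives $\bar\alpha_1(|x(t)|)\le e^{-c^*t}\bar\alpha_2(\|\varphi_0\|_{\mathcal W})+\lambda(\epsilon)/c^*$. Choosing $\epsilon<1$ with $\lambda(\epsilon)<c^*\bar\alpha_1(1)$, every solution eventually satisfies $|x(t)|<1$. Yet from $x_0\ge1/\epsilon$ one has $\dot x\ge\epsilon^3/(1+\epsilon^2)>0$, so $x(t)\to\infty$. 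A $\pi$-type rescaling can only deliver a rate that degenerates as $V\to\infty$, such as $-aV/(\tilde\omega+\theta(V))$. That is the Angeli--Sontag--Wang form, not $-c^*\mathcal P$.

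\textbf{Your sufficiency argument also has a hole.} You flag the discrete-tail issue, but \cite[Theorem 2]{LiuTeel} cannot fix it when jumps only satisfy $\mathcal P(g)\le\mathcal P$. Take $f(\varphi,u)=-\varphi(0,0)+u$, $\mathcal C=\mathcal D=\mathcal M^\Delta\times\mathcal U$, $g(\varphi,u)=\varphi(0,0)$ and $\mathcal P(\varphi)=|\varphi(0,0)|^2$. Then \eqref{end8.1.1}--\eqref{end8.1.2} hold with $c^*=1$ and $\lambda(s)=s^2$. Still, the purely discrete complete solutions stay at $x_0\neq0$, so pre-attractivity fails. Strict decrease at jumps, or a hypothesis excluding such solutions, is needed; the paper's one-line sufficiency proof has the same omission.
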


\begin{proof}
	If the smooth  functional $\mathcal{P}$ exists, then for $u \equiv 0$, the condition \eqref{end8.1.1} ensures the strict decrease of $\mathcal{P}$ along flows, while the condition \eqref{end8.1.2} guarantees that $\mathcal{P}$ does not increase at jumps. 	Hence, the functional $\mathcal{P}$ serves as an LKF to ensure the 0-input pre-asymptotic stability of $\mathcal{H}_{\mathcal{M}}^{\Delta}$. 
	
	We now establish the necessity part. Let $\mathcal{H}_{\mathcal{M}}^{\Delta}$ be 0-input pre-asymptotically stable. From Theorem \ref{8.1}, there exists a Lyapunov functional $V:\mathcal{M}^\Delta \to \mathbb{R}_{>0}$, and there exists $\rho\in\mathcal{K}_{\infty}$ such that \eqref{lemma8.1.1} and \eqref{lemma8.1.2} imply
	\begin{align}
		\Braket{\nabla V(\varphi),f(\varphi,q(|\varphi(0,0)|_\mathcal{W})v)}&\leq -aV(\varphi)+\rho(|v|),\notag\\
		&\!\!\!\!\!\!\!\!\!\!\!\!\!\!\!\!\!\!\!\!\!\!\!\!\!\!\!\!\!\!\!\!\! \forall (\varphi,q(|\varphi(0,0)|_\mathcal{W})v)\in\mathcal{C};\\ 
		V(g(\varphi,q(|\varphi(0,0)|_\mathcal{W})v))-V(\varphi)		&\leq0,\notag\\
		&\!\!\!\!\!\!\!\!\!\!\!\!\!\!\!\!\!\!\!\!\!\!\!\!\!\!\!\!\!\!\!\!\! \forall (\varphi,q(|\varphi(0,0)|_\mathcal{W})v)\in\mathcal{D}.         
	\end{align}
	From Corollary IV.5 in \cite{8.1proof}, there exists $\lambda\in \mathcal{K}$ such that $\rho(sr)\leq \lambda(s)\lambda(r)$ for all $(s,r)\in\mathbb{R}_{\geq0}\times\mathbb{R}_{\geq0}$. Hence,
	\begin{align}
		\Braket{\nabla V(\varphi),f(\varphi,u)}&\leq -aV(\varphi)+\lambda(1/q(|\varphi(0,0)|_\mathcal{W}))\lambda(|u|),\notag\\
		&\qquad\qquad\qquad\quad\forall (\varphi,u)\in\mathcal{C},\\
		V(g(\varphi,u))-V(\varphi)		&\leq0, \qquad\qquad\;\;\; \forall (\varphi,u)\in\mathcal{D},
	\end{align}
	where $u:=q(|\varphi(0,0)|_\mathcal{W})v$. 
	Define $\pi(r):=\int_{0}^{r} \frac{ds}{\tilde{\omega}+\theta(s)}$, where $\tilde{\omega}>0$, and $\theta\in\mathcal{K}$. Let $\mathcal{P}=\pi\circ V $. Since $\mathcal{P}(\varphi)=\int_0^{V(\varphi)}\frac{ds}{\tilde{\omega}+\theta(s)}\leq\int_0^{V(\varphi)}\frac{ds}{\tilde{\omega}}=\frac{V(\varphi)}{\tilde{\omega}}$, for all $\varphi\in\mathcal{C}$, we have $\frac{aV(\varphi)}{(\tilde{\omega}+\theta(V(\varphi)))\mathcal{P}(\varphi)}\geq\frac{aV(\varphi)}{(\tilde{\omega}+\theta(V(\varphi)))\cdot(V(\varphi)/\tilde{\omega})}=\frac{a\tilde{\omega}}{\tilde{\omega}+\theta(V(\varphi))}$.
	Then, 
	\begin{align*}
		\langle\nabla \mathcal{P}(\varphi),f(\varphi,u)\rangle&\leq  \frac{\Braket{\nabla V(\varphi),f(\varphi,u)}}{\tilde{\omega}+\theta(V(\varphi))}\;\\
		&\leq -\frac{aV(\varphi)}{\tilde{\omega}+\theta(V(\varphi))}   +\frac{\lambda(\frac{1}{q(|\varphi(0,0)|_\mathcal{W})})\lambda(|u|)}{\tilde{\omega}+\theta(V(\varphi))}\\
		&\leq -\frac{a\tilde{\omega}}{\tilde{\omega}+\theta(V(\varphi))} \mathcal{P}(\varphi)\\ &\;\; +\frac{\lambda(\frac{1}{q(|\varphi(0,0)|_\mathcal{W})})\lambda(|u|)}{\tilde{\omega}+\theta(\alpha_1(|\varphi(0,0)|_\mathcal{W}))}\\
		&\leq -\frac{a\tilde{\omega}}{\tilde{\omega}+\theta(\alpha_2(\|\varphi\|_{\mathcal{W}}))} \mathcal{P}(\varphi) \\ &\;\; +\frac{\lambda(\frac{1}{q(|\varphi(0,0)|_\mathcal{W})})\lambda(|u|)}{\tilde{\omega}+\theta(\alpha_1(|\varphi(0,0)|_\mathcal{W}))},\\
		\mathcal{P}(g(\varphi,u))-\mathcal{P}(\varphi)&\leq    \frac{	V(g(\varphi,u))-V(\varphi)}{\tilde{\omega}+\theta(V(\varphi))}
		\leq 0.
	\end{align*}
	In order to  ensure the coefficient of $\lambda(|u|)$ is bounded by 1, we can construct $\theta\in \mathcal{K}$ such that for all $s\in\mathbb{R}_{\geq0}$:
	\begin{equation}  
		\label{endproof8.1prop}
		\tilde{\omega}+\theta(\alpha_1(s))\geq \lambda(1/q(s)) ,
	\end{equation}
	Note that $q$ is smooth and $\tilde{\omega}>0$ can be defined appropriately. Define $c(\varphi)=\frac{a\tilde{\omega}}{\tilde{\omega}+\theta(\alpha_2(\|\varphi\|_{\mathcal{W}}))}$, and thus $c(\varphi)\in (0, a)$. There exists $c^* \in (0, a)$ such that $c(\varphi) \geq c^*$. As a result, 
	\begin{align*}
		\langle\nabla \mathcal{P}(\varphi),f(\varphi,u)\rangle
		&\leq-c^*\mathcal{P}(\varphi)+\lambda(|u|),\;&&\forall(\varphi,u)\in\mathcal{C},\\
		\mathcal{P}(g(\varphi,u))-\mathcal{P}(\varphi)&\leq0,\;\qquad&&\forall(\varphi,u)\in\mathcal{D}.
	\end{align*}
	This proves the necessity.
\end{proof}

The implication \Rmnum{8}$\Rightarrow$\Rmnum{1} follows  from the established results.
 Assumption \Rmnum{8} guarantees that the system is 0-input pre-asymptotically stable. From Proposition \ref{end8.1}, this property implies the existence of a smooth functional $\mathcal{P}: \mathcal{M}^\Delta \to \mathbb{R}_{>0}$ satisfying the dissipation inequalities \eqref{end8.1.1} and \eqref{end8.1.2} with a constant decay rate $c^*$. Identifying $V(\varphi) = \mathcal{P}(\varphi)$, $\rho(|u|) = \lambda(|u|)$, and $v = c^*$, this functional $\mathcal{P}$ precisely fulfills the requirements of an exponential iISS-LKF as per Definition \ref{defn exp ly}. Thus, the existence of such a functional is proved.

\section{EXPERIMENTS}
	Consider a quadcopter with the translational dynamics: $\dot{x}=\bar{A}x+\bar{B} u_{\mathrm{total}}$, where the state $x=[p^\top, v^\top]^\top \in\mathbb{R}^6$ consists of the position and velocity; see \cite{example1} for more details. $\bar{A}=[0_3, I_3; 0_3, -\tfrac{d_c}{m}I_3]$ and $\bar{B}=[0_3; \tfrac{1}{m}I_3]$, where $m>0$ and $d_c\in\mathbb{R}$ are respectively the mass and drag coefficient. The quadcopter is controller remotely via a wireless communication network, and thus both the communication delay $r>0$ and actuator constraints are involved. Hence, a swtiching controller is designed as
$u_{\mathrm{total}} = \mathbf{sat}(K_l x(t-r))+u$, where $K_l\in\mathbb{R}^6$ is the feedback gain for different mode $l\in\mathcal{Q}\subset\mathbb{N}$, $u\in\mathbb{R}^3$ is the external input, and $\mathbf{sat}(\cdot)$ is a component-wise saturation function with a maximal limit $u_{\max}$. Here let $\mathcal{Q}=\{1,2\}$.
As a result, the closed-loop system is modeled as a hybrid system with memory $\mathcal{H}_\mathcal{M}^\Delta=(\mathcal{F},\mathcal{G},\mathcal{C},\mathcal{D})$ with  
\begin{align*}
	\scalebox{0.9}{$\displaystyle
		\mathcal{F}(\varphi,u):=\begin{bmatrix}\dot{x} = \bar{A} \psi(0,0) + \bar{B}\mathbf{sat}(K_\ell \psi(-r,k(-r))) + \bar{B} u(0,0)\\0\\1\end{bmatrix}
		$}
\end{align*}
with $\varphi=(\psi,\ell,\tau)$ and $\mathcal{C}=\{\varphi\in\mathcal{M}^\Delta: \tau(0,0)\in[0,\delta],\ell(0,0)\in\mathcal{Q}\}$, and 
\begin{align*}
	\scalebox{0.9}{$\displaystyle
	\mathcal{G}(\varphi,u)\in\begin{bmatrix}\bar{D}\psi(0,0)\\\mathcal{Q}\\0\end{bmatrix}
		$}
\end{align*}
with $\mathcal{D}=\{\varphi\in\mathcal{M}^{\Delta}: \tau(0,0)=\delta,\ell(0,0)\in\mathcal{Q}\}$, where $\psi$ represents the continuous state, $k(s)=\max\{k:(s,k)\in\operatorname{dom}\varphi\}$, $\ell: \mathbb{R}\rightarrow\mathcal{Q}$ is the switching signal, $\tau$ is a timer to track the switching intervals with the bound $\delta>0$, and $\bar{D}$ is a matrix with appropriate dimensions. 

Let $\mathcal{W}:=\{0\}\times\mathcal{Q}\times[0,\delta]$. Since $\varphi=(\psi,\ell,\tau)$ and $\tau\in[0,\delta]$, we have $|\varphi(0,0)|_\mathcal{W}=|\psi(0,0)|$ and $\|\varphi\|_\mathcal{W}=\|\psi\|$ for all $\varphi\in\mathcal{C}\cup \mathcal{D}\subset\mathcal{M}^\Delta $.
 For all $\varphi\in\mathcal{M}^{\Delta}$, define a functional  $V(\varphi):=\sigma_{\ell}|\psi(0,0)|^2+\mu_{\ell}\int_{-r}^0e^{\eta s}|\psi(s,k(s))|^2ds$ with $\sigma_{\ell},\mu_{\ell}>0,\eta\geq0$. Thus, item (1) of Definition 5 holds with $\alpha_1(s)=\min_{\ell\in\mathcal{Q}}\{\sigma_{\ell}\}s^2$, and $\alpha_2(s)=\max_{\ell\in\mathcal{Q}}\{\sigma_{\ell}\:+\mu_\ell r\}s^2$.
For all $\varphi\in\mathcal{C}$, $D^{+}V(\varphi)= D^+ ( \sigma_{\ell} |\psi(0,0)|^2 ) + D^+ ( \mu_{\ell} \int_{-r}^0 e^{\eta s}|\psi(s,k(s))|^2 ds )$. In particular, $D^+(\sigma_{\ell}|x(t)|^2)= 2\sigma_{\ell}x(t)^\top\dot{x}(t) 
=2\sigma_{\ell} x(t)^\top\bar{A} x(t)+2\sigma_{\ell}x(t)^\top \bar{B} \mathbf{sat}(K_\ell x(t-r)) + 2\sigma_{\ell} x(t)^\top \bar{B}u$. The derivative of the integral term is bounded as: $D^+(\mu_{\ell}\int_{t-r}^t e^{\eta(s-t)}|x(s)|^2 ds)=\mu_{\ell}|x(t)|^2-\mu_{\ell} e^{-\eta r}|x(t-r)|^2-\eta\mu_{\ell} \int_{t-r}^te^{\eta(s-t)} |x(s)|^2 ds \leq \mu_{\ell} |\psi(0,0)|^2-\mu_{\ell}e^{-\eta r}|\psi(-r, k(-r))|^2$. For notational brevity, let $\psi(-r)$ denote $\psi(-r, k(-r))$.
We obtain
\begin{align*}
	D^+V(\varphi) \leq& \psi(0,0)^\top (\sigma_{\ell}(\bar{A}+\bar{A}^\top)) \psi(0,0) + \mu_{\ell} |\psi(0,0)|^2 
	\\&- \mu_{\ell} e^{-\eta r} |\psi(-r)|^2+ 2\sigma_{\ell}\psi(0,0)^\top\bar{B}\mathbf{sat}(K_\ell \psi(-r)) 
	\\&+ 2\sigma_{\ell} \psi(0,0)^\top \bar{B} u.
\end{align*}
From the saturation function, $|\mathbf{sat}(K_\ell \psi(-r))| \leq |K_\ell \psi(-r)| \leq |K_\ell||\psi(-r)|$. 
Let $\Lambda:=2\psi(0,0)^\top (\sigma_{\ell}\bar{B}\mathbf{sat}(K_\ell \psi(-r)))$. 
From Young's inequality, there exists $\epsilon_{1} > 0$ such that 
\begin{align*}
	\begin{aligned}
		\Lambda &\leq \epsilon_{1} |\psi(0,0)|^2 + \frac{1}{\epsilon_{1}} |\sigma_{\ell} \bar{B}\mathbf{sat}(K_\ell \psi(-r))|^2 \\
		&\leq \epsilon_{1} |\psi(0,0)|^2 + ( \frac{\sigma_{\ell}^2 |\bar{B}|^2 |K_\ell|^2}{\epsilon_{1}})|\psi(-r)|^2, 
	\end{aligned}
\end{align*}
and there exists $\epsilon_{2}>0$ such that 
\begin{align*}
	\begin{aligned}
		2\psi(0,0)^\top(\sigma_{\ell}\bar{B}u)&\leq\epsilon_{2}|\psi(0,0)|^2+\frac{1}{\epsilon_{2}}|\sigma_{\ell}\bar{B}u|^2 \\
		&\leq\epsilon_{2}|\psi(0,0)|^2+\frac{\sigma_{\ell}^2|\bar{B}|^2}{\epsilon_{2}}|u|^2.
	\end{aligned}
\end{align*}
Hence, for all $\varphi\in\mathcal{C}$, $D^+V(\varphi) \leq ( 2\sigma_{\ell} \lambda_{\max}(\frac{\bar{A}+\bar{A}^\top}{2}) + \mu_{\ell} + \epsilon_{1} + \epsilon_{2} ) |\psi(0,0)|^2 + ( \frac{\sigma_{\ell}^2 |\bar{B}|^2 |K_\ell|^2}{\epsilon_{1}} - \mu_{\ell} e^{-\eta r}) |\psi(-r)|^2 +\frac{\sigma_{\ell}^2 |\bar{B}|^2}{\epsilon_{2}}|u|^2$. For all $\varphi\in\mathcal{D}$, $V(\mathcal{G}(\varphi,u))-V(\varphi)\leq\sigma_{\ell}(\lambda_{\max}^2(\bar{D})-1)|\psi(0,0)|^2$.

\begin{figure}[!t]
	\centering
	\begin{subfigure}[c]{0.48\columnwidth}
		\centering
		\includegraphics[width=\linewidth,keepaspectratio]{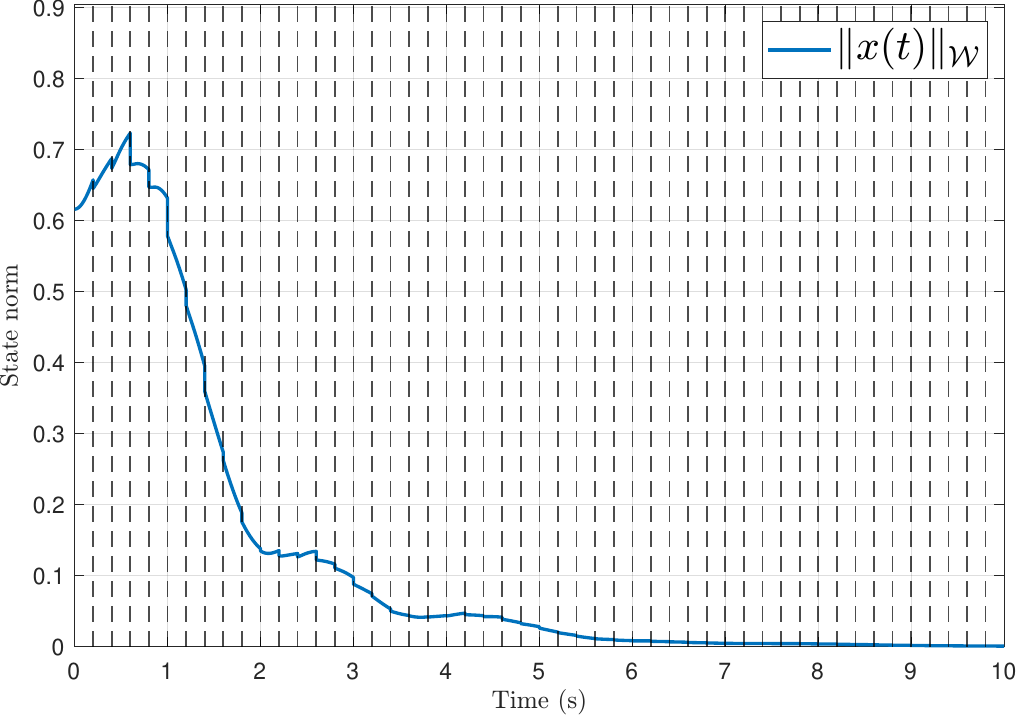}
		\caption{The time evolution of $\|x(t)\|_\mathcal{W}$ under $u_1$.}
		\label{fig1}
	\end{subfigure}\hfill
	\begin{subfigure}[c]{0.48\columnwidth}
		\centering
		\includegraphics[width=\linewidth,keepaspectratio]{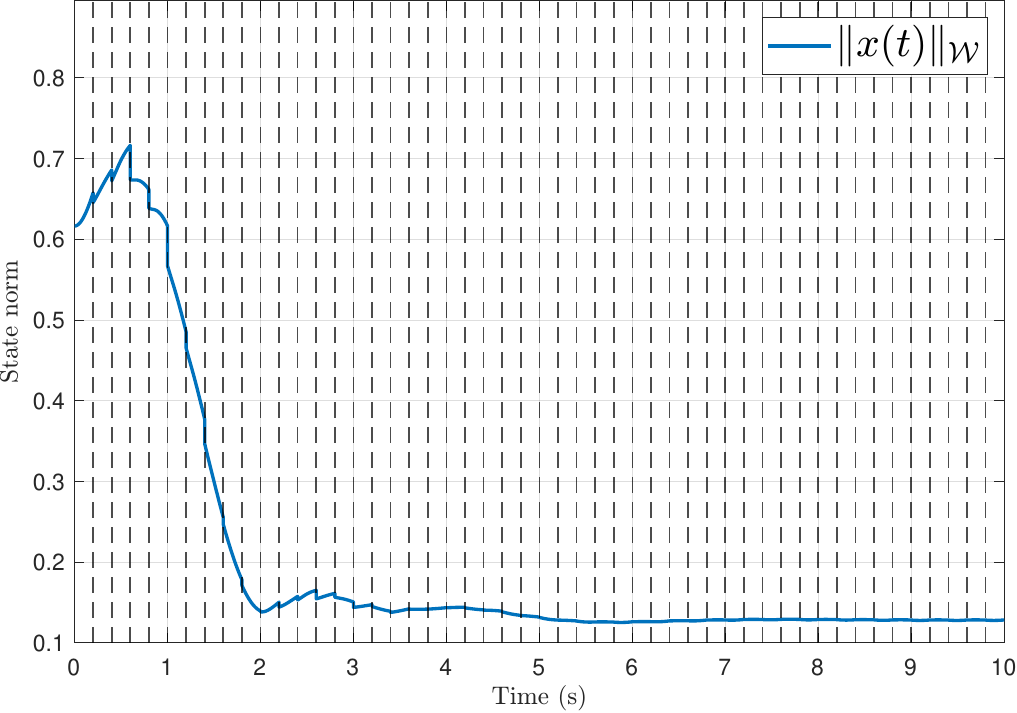}
		\caption{The time evolution of $\|x(t)\|_\mathcal{W}$ under $u_2$.}
		\label{fig2}
	\end{subfigure}
	\\[4pt]
	
	\begin{subfigure}[c]{0.48\columnwidth}
		\centering
		\includegraphics[width=\linewidth,keepaspectratio]{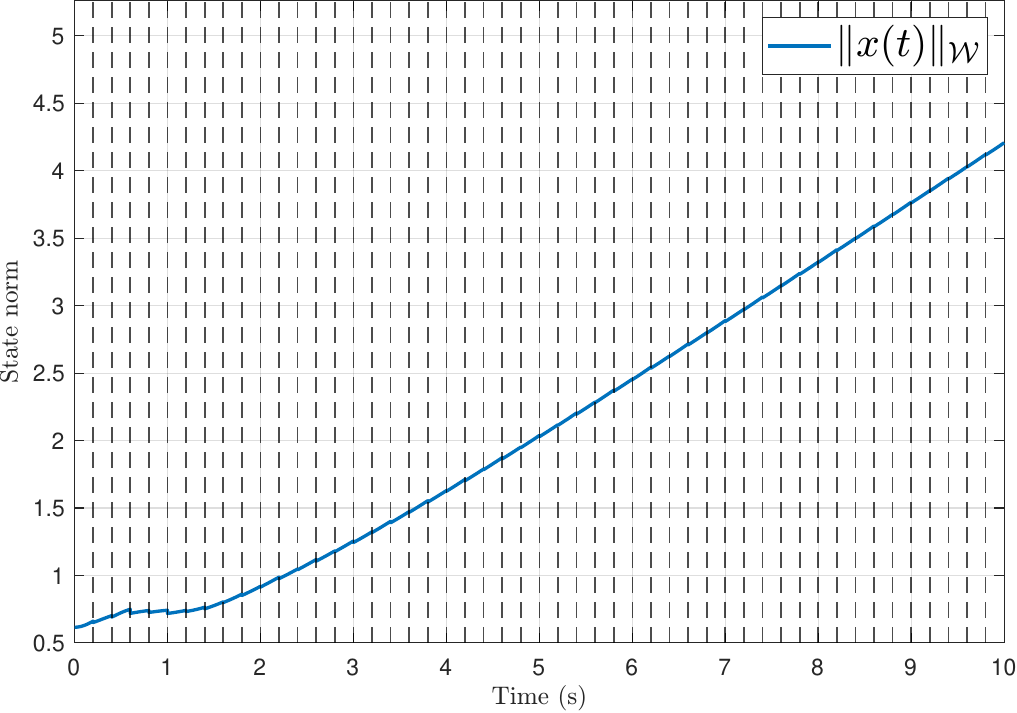}
		\caption{The time evolution of $\|x(t)\|_\mathcal{W}$ under $u_3$.}
		\label{fig3}
	\end{subfigure}\hfill
	\begin{subfigure}[c]{0.5\columnwidth}
		\centering
		\includegraphics[width=\linewidth,keepaspectratio]{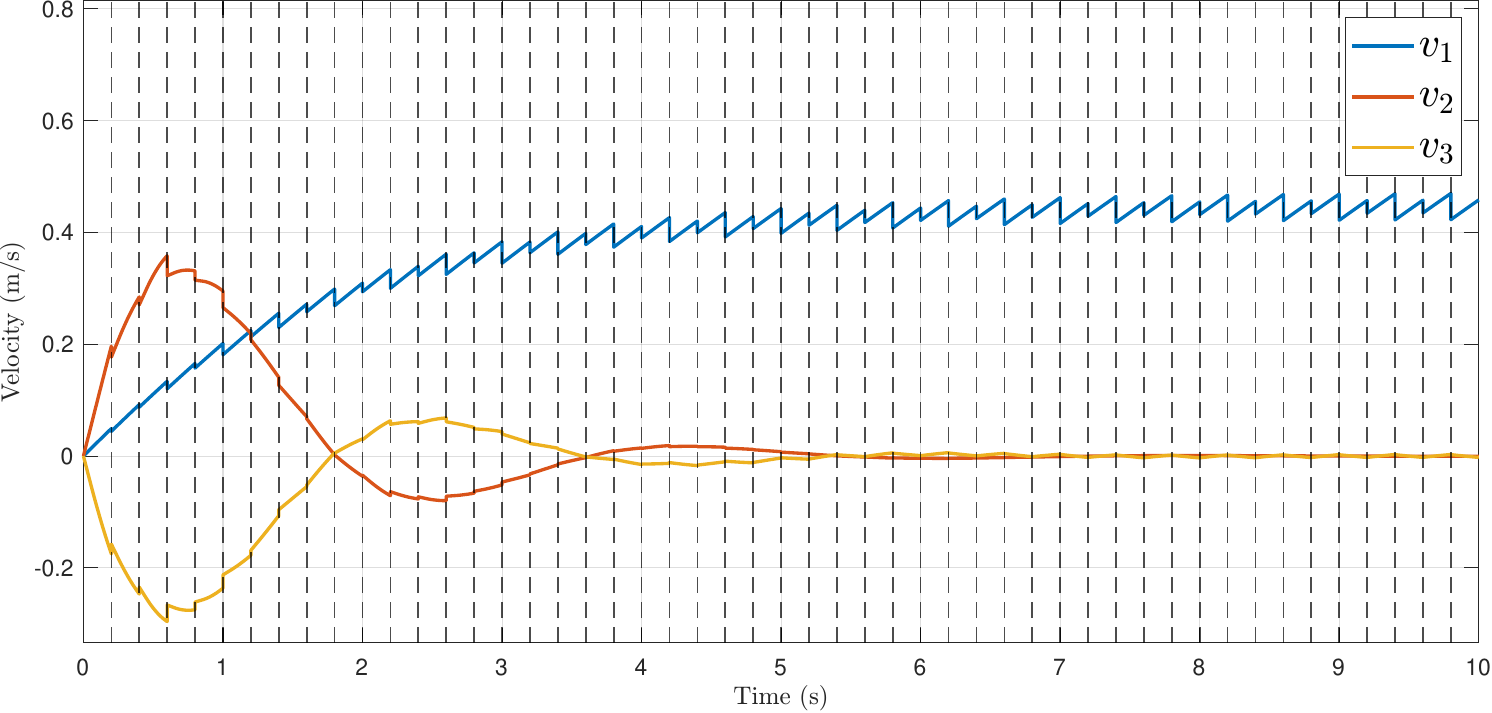}
		\caption{The time evolution of velocity trajectories ($v_1, v_2, v_3$) under $u_3$.}
		\label{fig4}
	\end{subfigure}	
	\caption{Time evolution of the state norm $\|x(t)\|_\mathcal{W}$ and velocity trajectories under different control inputs.}
	\label{fig:all}
\end{figure}

The simulation parameters are set as $m = 1.2$, $d_c = 0.2$, $u_{\max} = 1.0$, $r = 0.05\,\mathrm{s}$, and $\delta = 0.2\,\mathrm{s}$. The feedback gain matrix is structured as $K_\ell = [-k_{p\ell}I_3, -k_{d\ell}I_3]$. The control gains are selected as $k_{p1} = 4.8, k_{d1} = 1.5$ and $k_{p2} = 3.6, k_{d2} = 1.3$. Three input scenarios are tested: a vanishing input $u_1=[0.5, 0, -0.2]^\top e^{-0.5 t}$ and two constant inputs $u_2=[0.5, 0, -0.2]^\top, u_3=[1.5, 0, -0.2]^\top$.

From Fig.~\ref{fig:all}, the system stability is guaranteed under different external inputs. For the finite-energy input $u_1$ (Fig.~\ref{fig1}), $\|x(t)\|_\mathcal{W}$ converges to zero, consistent with the 0-GAS requirement of iISS. In the case of the small persistent input $u_2$ (Fig.~\ref{fig2}), the actuator remains unsaturated, and the state converges to a bounded neighborhood (asymptotic gain property). Conversely, the large input $u_3$ saturates the first control channel (Fig.~\ref{fig4}) since the maximum control effort ($1.0$) cannot cancel the disturbance ($1.5$). Consequently, $v_1$ converges to a positive constant, causing the position and the state norm to grow linearly to infinity (Fig.~\ref{fig3}). This unbounded trajectory under a bounded input confirms that the system is not ISS,  yet it remains iISS.

\section{CONCLUSION}

This paper established a unified iISS framework for hybrid systems with memory. By leveraging the Krasovskii approach, we proved the equivalence among Lyapunov criteria, dissipativity, and storage functionals. These findings effectively resolve the analytical challenges posed by the coupling of memory-dependent dynamics and impulsive effects. Future research will target networked and event-triggered control to address resource constraints, with experimental validation on physical testbeds.

\bibliographystyle{ieeetr}
\bibliography{cccconf} 
%
%
%
%

\end{document}